\documentclass[oneside,english,american]{amsart}
\usepackage[T1]{fontenc}
\usepackage[latin9]{inputenc}
\usepackage[a4paper]{geometry}
\geometry{verbose,tmargin=2cm,bmargin=2cm,lmargin=2cm,rmargin=2cm}
\usepackage{amsthm}
\usepackage{amstext}

\makeatletter

\newcommand{\lyxmathsym}[1]{\ifmmode\begingroup\def\b@ld{bold}
  \text{\ifx\math@version\b@ld\bfseries\fi#1}\endgroup\else#1\fi}

\numberwithin{equation}{section} 
\numberwithin{figure}{section} 
\theoremstyle{plain}
\newtheorem{thm}{Theorem}
  \theoremstyle{remark}
  \newtheorem{rem}[thm]{Remark}

\makeatother

\usepackage{babel}

\begin{document}

\title{Old and new results about relativistic Hermite polynomials}

\author{C. Vignat}

\maketitle

\section{Introduction}

The relativistic Hermite polynomials (RHP) were introduced in 1991
by Aldaya et al. \cite{Aldaya} in a generalization of the theory
of the quantum harmonic oscillator to the relativistic context. These
polynomials were later related to the more classical Gegenbauer (or
ultraspherical) polynomials in a study by Nagel \cite{Nagel}. Thus
some of their properties can be deduced from the properties of the
well-known Gegenbauer polynomials, as underlined by M. Ismail in \cite{Ismail}.
In this report we give new proofs of already known results but also
new results about these polynomials. We use essentially three basic
tools: the representation of polynomials as moments, the subordination
tool and Nagel's identity.

\section{Definitions and Tools}

\subsection{Polynomials}

\subsubsection{Relativistic Hermite polynomials}

\begin{flushleft}
The relativistic Hermite polynomial of degree $n$ and parameter $N\ne0$
is defined by the Rodrigues formula\foreignlanguage{english}{\[
H_{n}^{N}\left(X\right)=\left(-1\right)^{n}\left(1+\frac{X^{2}}{N}\right)^{N+n}\frac{d^{n}}{dX^{n}}\left(1+\frac{X^{2}}{N}\right)^{-N};\]
}examples of RHP polynomials are\begin{eqnarray*}
H_{0}^{N}\left(X\right)=1;H_{1}^{N}\left(X\right)=2X;\,\, H_{2}^{N}\left(X\right)=2\left(-1+X^{2}\left(2+\frac{1}{N}\right)\right)\\
H_{3}^{N}\left(X\right)=4\left(1+\frac{1}{N}\right)\left(X^{3}\left(2+\frac{1}{N}\right)-3X\right)\end{eqnarray*}
These polynomials are extensions of the classical Hermite polynomials
$H_{n}\left(X\right)$ that are defined by the Rodrigues formula\[
H_{n}\left(X\right)=\left(-1\right)^{n}\exp\left(X^{2}\right)\frac{d^{n}}{dX^{n}}\exp\left(-X^{2}\right)\]
and thus can be obtained as the limit case\[
\lim_{N\to+\infty}H_{n}^{N}\left(X\right)=H_{n}\left(X\right).\]
 An explicit formula for the relativistic Hermite polynomial is \cite{Aldaya}\begin{equation}
H_{n}^{N}\left(X\right)=\frac{\left(2N\right)_{n}}{\left(2\sqrt{N}\right)^{n}}\sum_{k=0}^{\left\lfloor \frac{n}{2}\right\rfloor }\frac{\left(-1\right)^{k}}{\left(N+\frac{1}{2}\right)_{k}}\frac{n!}{\left(n-2k\right)!k!}\left(\frac{2X}{\sqrt{N}}\right)^{n-2k}\label{eq:explicitRHP}\end{equation}
where $\left(n\right)_{k}$ is the Pochhammer symbol.
\par\end{flushleft}

\subsubsection{Gegenbauer polynomials}

\begin{flushleft}
The Gegenbauer polynomial of degree $n$ and parameter $N$ is defined
by the Rodrigues formula\[
C_{n}^{N}\left(X\right)=\gamma_{n}^{N}\left(-1\right)^{n}\left(1-X^{2}\right)^{\frac{1}{2}-N}\frac{d^{n}}{dX^{n}}\left(1-X^{2}\right)^{n+N-\frac{1}{2}}\]
with \[
\gamma_{n}^{N}=\frac{\left(2N\right)_{n}}{2^{n}n!\left(N+\frac{1}{2}\right)_{n}};\]
examples of Gegenbauer polynomials are\foreignlanguage{english}{\[
C_{0}^{N}\left(X\right)=1;\lyxmathsym{ }\,\, C\left(X\right)=2NX;\lyxmathsym{ }\,\, C\left(X\right)=2N\left(N+1\right)X^{2}-N;\,\, C_{3}^{N}\left(X\right)=2N\left(N+1\right)\left(\frac{2\left(N+2\right)}{3}X^{3}-X\right)\]
}and an explicit formula is\begin{equation}
C_{n}^{N}\left(X\right)=\sum_{k=0}^{\left\lfloor \frac{n}{2}\right\rfloor }\left(-1\right)^{k}\frac{\left(N\right)_{n-k}}{\left(n-2k\right)!k!}\left(2X\right)^{n-2k}.\label{eq:explicitGegenbauer}\end{equation}

\par\end{flushleft}

\subsection{tools}

\subsubsection{Nagel's identity}

Nagel's identity \cite{Nagel} is a first link between relativistic
Hermite polynomials and Gegenbauer polynomials. The RHP with degree
$n$ and parameter $N$ is linked to the Gegenbauer polynomial with
same degree and same parameter via Nagel's identity\foreignlanguage{english}{\[
H_{n}^{N}\left(X\sqrt{N}\right)=\frac{n!}{N^{\frac{n}{2}}}\left(1+X^{2}\right)^{\frac{n}{2}}C_{n}^{N}\left(\frac{X}{\sqrt{1+X^{2}}}\right).\]
}As a consequence of Nagel's identity, we deduce the following theorem
\begin{thm}
The relativistic Hermite polynomials is related to the Gegenbauer
polynomial as\begin{equation}
C_{n}^{N}\left(X\right)=\alpha_{n}^{N}H_{n}^{\frac{1}{2}-N-n}\left(-iX\sqrt{\frac{1}{2}-N-n}\right)\label{eq:CniX}\end{equation}

with\[
\alpha_{n}^{N}=\left(-2i\right)^{n}\frac{\left(N\right)_{n}}{\left(2N+n\right)_{n}}\frac{\left(\frac{1}{2}-N-n\right)^{\frac{n}{2}}}{n!}.\]
\end{thm}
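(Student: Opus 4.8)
The plan is to prove \eqref{eq:CniX} by a direct term--by--term comparison of the two explicit expansions \eqref{eq:explicitRHP} and \eqref{eq:explicitGegenbauer}. Since both sides of \eqref{eq:CniX}, after multiplication by suitable Pochhammer symbols, are polynomials in $N$, it suffices to establish the identity formally, coefficient by coefficient in $X$, with no convergence or genericity issue to worry about.

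First I would set $M=\tfrac12-N-n$ and substitute the argument $-iX\sqrt{M}$ into \eqref{eq:explicitRHP}. The substitution is clean because $\dfrac{2(-iX\sqrt M)}{\sqrt M}=-2iX$, so the factor $\left(\dfrac{2X}{\sqrt N}\right)^{n-2k}$ of \eqref{eq:explicitRHP} (with $N$ replaced by $M$) becomes $(-2i)^{n-2k}X^{n-2k}$. Pulling $(-2i)^n$ to the front, using $(-2i)^{-2k}=(-4)^{-k}$ to see that $(-1)^k(-2i)^{-2k}=4^{-k}$, and then writing $X^{n-2k}=2^{-n}4^{k}(2X)^{n-2k}$, the inner sum collapses to $\dfrac{(-2i)^n n!}{2^n}\sum_{k}\dfrac{(2X)^{n-2k}}{(M+\frac12)_k\,(n-2k)!\,k!}$, carried by the overall constant $(2M)_n/(2\sqrt M)^n$. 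Now use $M+\tfrac12=1-N-n$ and $2M=1-2N-2n$ to re-express everything in terms of $N$ and $n$ only.

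The core of the argument is then two elementary Pochhammer identities: the reflection $(1-N-n)_k=(-1)^k (N+n-k)_k$ and the telescoping relation $(N+n-k)_k\,(N)_{n-k}=(N)_n$. The first turns the $k$-th coefficient into one carrying $(-1)^k/(N+n-k)_k$; combined with the second, the ratio of the $k$-th term of the rewritten RHP sum to the $k$-th term $(-1)^k (N)_{n-k}(2X)^{n-2k}/[(n-2k)!\,k!]$ of the Gegenbauer expansion \eqref{eq:explicitGegenbauer} equals $1/(N)_n$, \emph{independently of $k$}. Hence the whole sum is $C_n^N(X)/(N)_n$. Finally the reflection $(1-2N-2n)_n=(-1)^n (2N+n)_n$, together with $(2\sqrt M)^n\,2^n=4^n M^{n/2}$ and $4^n/(2i)^n=(-2i)^n$, reduces the leftover constants to precisely $\alpha_n^N$, and solving for $C_n^N(X)$ yields \eqref{eq:CniX}.

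The only genuine obstacle is bookkeeping: tracking the powers of $2$, of $i$, and of $\sqrt{M}$ through the substitution while correctly invoking the three Pochhammer identities above. In particular, the sign $-2i$ (rather than $2i$) inside $\alpha_n^N$ comes exactly from the factor $(-1)^n$ in $(1-2N-2n)_n=(-1)^n(2N+n)_n$, so that part of the computation must be done carefully. One could alternatively try to derive \eqref{eq:CniX} from Nagel's identity, but inverting it only relates $C_n^N$ to $H_n^N$ with the \emph{same} parameter, and one would still need a separate parameter--reflection identity $N\leftrightarrow\tfrac12-N-n$ for the RHP; the direct coefficient comparison is shorter and self-contained.
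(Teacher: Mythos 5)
Your proof is correct, but it takes a genuinely different route from the paper. The paper obtains \eqref{eq:CniX} in two lines by quoting the Gegenbauer parameter-reflection identity
\[
C_{n}^{N}\left(iX\right)=\left(-2i\right)^{n}\frac{\left(N\right)_{n}}{\left(2N+n\right)_{n}}\left(1+X^{2}\right)^{\frac{n}{2}}C_{n}^{\frac{1}{2}-N-n}\left(\frac{X}{\sqrt{1+X^{2}}}\right)
\]
from Brychkov's handbook and then substituting Nagel's identity (with parameter $\tfrac12-N-n$) into its right-hand side; so the ``separate parameter-reflection identity'' you correctly anticipate as the missing ingredient of the Nagel-based route is exactly what the paper imports from the literature. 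Your argument instead verifies \eqref{eq:CniX} directly from the explicit expansions \eqref{eq:explicitRHP} and \eqref{eq:explicitGegenbauer}, and I checked the key steps: the substitution does give the factor $(-2i)^{n-2k}$, the cancellation $(-1)^k(-2i)^{-2k}\cdot 4^{k}=1$ holds, the Pochhammer identities $(1-N-n)_k=(-1)^k(N+n-k)_k$, $(N+n-k)_k(N)_{n-k}=(N)_n$ and $(1-2N-2n)_n=(-1)^n(2N+n)_n$ are all correct, and the leftover constant $\frac{(-1)^n(-2i)^n}{4^n}=\frac{1}{(-2i)^n}$ reassembles into $\alpha_n^N$ exactly as claimed. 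What your approach buys is self-containedness: it needs no external reference and, as a byproduct, effectively reproves the Brychkov reflection formula. What it costs is the bookkeeping you acknowledge; the paper's derivation is shorter and makes the structural origin of the parameter shift $N\mapsto\tfrac12-N-n$ (a known Gegenbauer symmetry) more transparent. If you write your version up, carry the computation out explicitly rather than leaving it as a plan, and note that the branch of $\sqrt{\tfrac12-N-n}$ must be chosen consistently between $\alpha_n^N$ and the argument of $H_n^{\frac12-N-n}$, since only even powers of that square root ultimately survive.
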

\begin{proof}
By the identity \cite[(7.2.15.8)]{Brychkov},\[
C_{n}^{N}\left(iX\right)=\left(-2i\right)^{n}\frac{\left(N\right)_{n}}{\left(2N+n\right)_{n}}\left(1+X^{2}\right)^{\frac{n}{2}}C_{n}^{\frac{1}{2}-N-n}\left(\frac{X}{\sqrt{1+X^{2}}}\right)\]
and the result follows by application of Nagel's identity.
\end{proof}
Formula (\ref{eq:CniX}) was derived by Nagel \cite{Nagel}, who notes:
{}``This representation does not seem to be very useful however.''
Indeed, the fact that the parameter $N$ of the Gegenbauer polynomial
is transformed, for the Relativistic Hermite polynomial, into a parameter
$\frac{1}{2}-N-n$ that depends on the degree $n$, seems to make
this identity a priori less useful than expected. However, as will
be shown in Section \ref{sec:scaling}, this formula allows in fact
an easy extension of the scaling identity for Gegenbauer polynomials
to the case of Relativistic Hermite polynomials. 

A similar version of formula (\ref{eq:CniX}) appears in \cite{Midya}
and also in \cite{Masjedjamei} in the case of deformed Hermite polynomials.

\subsubsection{The subordination tool}

Writing polynomials as scale mixtures of others allows to deduce properties
between them. We use the probabilistic expectation to denote the subordination
by the mesure $f$ as \[
E_{b}H_{n}\left(X\sqrt{b}\right)=\int_{0}^{+\infty}H_{n}\left(X\sqrt{b}\right)f\left(b\right)db.\]
The subordination dependence between Hermite, relativistic Hermite
and Gegenbauer polynomials is as follows.
\begin{thm}
The Hermite, relativistic Hermite and Gegenbauer polynomials are related
as follows

\[
H_{n}\left(X\right)=\frac{N^{\frac{n}{2}}}{\left(N\right)_{\frac{n}{2}}}E_{c}H_{n}^{N}\left(\frac{X\sqrt{N}}{\sqrt{c}}\right),\,\,\, c\sim\Gamma\left(N+\frac{n+1}{2}\right).\]
\begin{equation}
C_{n}^{N}\left(X\right)=\frac{\left(N\right)_{\frac{n}{2}}}{n!}E_{b}H_{n}\left(X\sqrt{b}\right),\,\,\, b\sim\Gamma\left(N+\frac{n}{2}\right)\label{eq:Gegenbauersubordination}\end{equation}

Here, $b\sim\Gamma\left(N+\frac{n}{2}\right)$ means that $b$ is
a random variable that follows the Gamma distribution with shape parameter
$N+\frac{n}{2}$ that is $\gamma_{N+\frac{n}{2}}\left(b\right)=\frac{1}{\Gamma\left(N+\frac{n}{2}\right)}e^{-b}b^{N+\frac{n}{2}-1},\,\, b\ge0$.\end{thm}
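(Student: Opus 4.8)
The plan is to prove both identities directly from the explicit coefficient formulas (\ref{eq:explicitRHP}) and (\ref{eq:explicitGegenbauer}): expand the polynomial inside the expectation as a finite sum of monomials, integrate term by term against the Gamma density, and match monomials. The only analytic input needed is the elementary moment formula $\int_0^{\infty}t^{s}\gamma_{\alpha}(t)\,dt=\Gamma(\alpha+s)/\Gamma(\alpha)=(\alpha)_{s}$, valid whenever $s>-\alpha$.

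For the Gegenbauer identity (\ref{eq:Gegenbauersubordination}) I would substitute $X\mapsto X\sqrt{b}$ into the classical explicit formula $H_{n}(X)=n!\sum_{k=0}^{\lfloor n/2\rfloor}\frac{(-1)^{k}}{k!(n-2k)!}(2X)^{n-2k}$ (which follows at once from the Rodrigues formula for $H_{n}$) and integrate against $\gamma_{N+n/2}$. The $k$-th term picks up the positive moment $E_{b}[b^{(n-2k)/2}]=\Gamma(N+n-k)/\Gamma(N+n/2)$, which rewrites as $(N)_{n-k}/(N)_{n/2}$. Pulling $n!/(N)_{n/2}$ out front, what remains is exactly $\sum_{k}(-1)^{k}\frac{(N)_{n-k}}{k!(n-2k)!}(2X)^{n-2k}=C_{n}^{N}(X)$ by (\ref{eq:explicitGegenbauer}). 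There is no real obstacle here beyond the bookkeeping observation that $n/2$ is a half-integer for odd $n$; this is harmless since every Pochhammer symbol that occurs is a ratio of Gamma functions with positive argument.

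For the Hermite identity I would run the same computation starting from (\ref{eq:explicitRHP}). The substitution $X\mapsto X\sqrt{N}/\sqrt{c}$ cancels the internal $\sqrt{N}$ and turns the $k$-th monomial into $(2X)^{n-2k}c^{-(n-2k)/2}$; integrating against $\gamma_{N+(n+1)/2}$ replaces $c^{-(n-2k)/2}$ by the negative moment $\Gamma(N+\tfrac12+k)/\Gamma(N+\tfrac{n+1}{2})=(N+\tfrac12)_{k}/(N+\tfrac12)_{n/2}$, which exists once $N>-\tfrac12$ (and in general by analytic continuation in $N$). The pleasant feature is that this factor is precisely the reciprocal of the $1/(N+\tfrac12)_{k}$ occurring in (\ref{eq:explicitRHP}), so the two $k$-dependences cancel and the surviving sum collapses to $n!\sum_{k}\frac{(-1)^{k}}{k!(n-2k)!}(2X)^{n-2k}=H_{n}(X)$, up to an overall constant.

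That leftover constant, $(2N)_{n}/\big((2\sqrt{N})^{n}(N+\tfrac12)_{n/2}\big)$, must then be identified with $N^{n/2}/(N)_{n/2}$; this is the one genuinely computational step, and the half-integer shifts for odd $n$ make it slightly delicate. I would settle it with the Legendre duplication formula $\Gamma(z)\Gamma(z+\tfrac12)=2^{1-2z}\sqrt{\pi}\,\Gamma(2z)$ applied once at $z=N$ and once at $z=N+\tfrac{n}{2}$, which gives $(2N)_{n}=2^{n}(N)_{n/2}(N+\tfrac12)_{n/2}$; substituting this together with $(2\sqrt{N})^{n}=2^{n}N^{n/2}$ produces the claimed constant immediately. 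As a consistency check one can verify the normalization against the examples $H_{1}^{N},H_{2}^{N}$ listed above, or evaluate at $X=0$ for even $n$.
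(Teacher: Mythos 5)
Your proposal is correct and takes essentially the same route as the paper: term-by-term integration of the explicit coefficient formulas against the Gamma density, with the $k$-dependence cancelling and the Euler/Legendre duplication formula $(2N)_{n}=2^{n}(N)_{\frac{n}{2}}(N+\tfrac{1}{2})_{\frac{n}{2}}$ identifying the leftover constant. The paper's proof is the identical computation, merely written with Pochhammer symbols of (possibly negative) index such as $(N+\tfrac{n+1}{2})_{k-\frac{n}{2}}$ in place of your explicit negative-moment Gamma ratios.
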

\begin{proof}
Since\[
Ec^{l}=\left(N+\frac{n+1}{2}\right)_{l}\]
we deduce from (\ref{eq:explicitRHP}) that \begin{eqnarray*}
E_{c}H_{n}^{N}\left(\frac{X\sqrt{N}}{\sqrt{c}}\right) & = & \frac{\left(2N\right)_{n}}{\left(2\sqrt{N}\right)^{n}}\sum_{k=0}^{\left\lfloor \frac{n}{2}\right\rfloor }\frac{\left(-1\right)^{k}}{\left(N+\frac{1}{2}\right)_{k}}\frac{n!}{\left(n-2k\right)!k!}\left(N+\frac{n+1}{2}\right)_{k-\frac{n}{2}}\left(2X\right)^{n-2k}\\
 & = & \frac{\left(2N\right)_{n}}{\left(2\sqrt{N}\right)^{n}\left(N+\frac{1}{2}\right)_{\frac{n}{2}}}H_{n}\left(X\right)\end{eqnarray*}
so that the first result follows after application of Euler's duplication
formula.

The same way, we compute\begin{eqnarray*}
\frac{\left(N\right)_{\frac{n}{2}}}{n!}E_{b}H_{n}\left(X\sqrt{b}\right) & = & \sum_{k=0}^{\left\lfloor \frac{n}{2}\right\rfloor }\left(-1\right)^{k}\frac{\left(N\right)_{\frac{n}{2}}}{\left(n-2k\right)!k!}\left(N+\frac{n}{2}\right)_{\frac{n}{2}-k}\left(2X\right)^{n-2k}\\
 & = & \sum_{k=0}^{\left\lfloor \frac{n}{2}\right\rfloor }\left(-1\right)^{k}\frac{\left(N\right)_{n-k}}{\left(n-2k\right)!k!}\left(2X\right)^{n-2k}\end{eqnarray*}
which coincides with (\ref{eq:explicitGegenbauer}).
\end{proof}

\subsubsection{The moment representation}

The well-known moment representation of the Hermite polynomial \begin{equation}
H_{n}\left(X\right)=2^{n}E_{Z}\left(X+iZ\right)^{n},\label{eq:momentrepresentationHermite}\end{equation}
where $Z$ is Gaussian centered with variance $\frac{1}{2}$ is a
consequence of the integral formula \cite[8.951]{Gradshteyn}. 

Its extension to the Gegenbauer polynomials is deduced from integral
\cite[8.931]{Gradshteyn} (called Laplace first integral in \cite{Ismail})
and reads\foreignlanguage{english}{\begin{equation}
C_{n}^{N}\left(X\right)=\frac{\left(2N\right)_{n}}{n!}E_{Z_{N}}\left[X+i\sqrt{1-X^{2}}Z_{N}\right]^{n}\label{eq:momentCZN}\end{equation}
}where the random variable $Z_{N}$ has a Student-r distribution\begin{equation}
f_{Z_{N}}\left(Z\right)=\frac{\Gamma\left(N+\frac{1}{2}\right)}{\Gamma\left(N\right)\Gamma\left(\frac{1}{2}\right)}\left(1-Z^{2}\right)^{N-1},\lyxmathsym{  }-1\le Z\le+1.\label{eq:fZn}\end{equation}
Using Nagel's identity, we deduce from (\ref{eq:momentCZN}) the moment
representation for RHP polynomials\begin{equation}
H_{n}^{N}\left(X\right)=\frac{\left(2N\right)_{n}}{N^{\frac{n}{2}}}E_{Z_{N}}\left(\frac{X}{\sqrt{N}}+iZ_{N}\right)^{n}\label{eq:RHPmoment}\end{equation}
with $Z_{N}$ distributed according to (\ref{eq:fZn}). 

Another moment representation for the Gegenbauer polynomials is as
follows:
\begin{thm}
The Gegenbauer polynomial and relativistic Hermite polynomials have
moment representation

\begin{equation}
C_{n}^{N}\left(X\right)=\frac{2^{n}\left(N\right)_{\frac{n}{2}}}{n!}E_{Z,b}\left(X\sqrt{b}+iZ\right)^{n}\label{eq:CbZ}\end{equation}
and\begin{equation}
H_{n}^{N}\left(X\sqrt{N}\right)=\frac{2^{n}\left(2N\right)_{n}}{N^{\frac{n}{2}}}E_{b,Z}\left(X\sqrt{b}+i\sqrt{1+X^{2}}Z\right)^{n}\label{eq:HNnbZ}\end{equation}
where $Z$ is Gaussian centered with variance $\frac{1}{2}$ and independent
of $b$ which is Gamma distributed with shape parameter $N+\frac{n}{2}.$\end{thm}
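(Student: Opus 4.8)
The plan is to read off both identities by composing two facts already in hand: the Gamma--subordination of the Gegenbauer polynomials, formula (\ref{eq:Gegenbauersubordination}), and the classical moment representation (\ref{eq:momentrepresentationHermite}) of the Hermite polynomials; the relativistic statement (\ref{eq:HNnbZ}) will then follow from (\ref{eq:CbZ}) by a single application of Nagel's identity.

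For (\ref{eq:CbZ}), I would start from (\ref{eq:Gegenbauersubordination}), i.e.\ from $C_n^N(X)=\frac{(N)_{n/2}}{n!}E_b H_n(X\sqrt b)$ with $b\sim\Gamma(N+\frac n2)$, and substitute into it the moment representation (\ref{eq:momentrepresentationHermite}) taken at the point $X\sqrt b$, namely $H_n(X\sqrt b)=2^n E_Z(X\sqrt b+iZ)^n$ with $Z$ Gaussian, centered, of variance $\frac12$. Since this $Z$ is independent of the subordinating variable $b$, Fubini's theorem turns the iterated expectation $E_b E_Z$ into a single expectation $E_{Z,b}$ against the product law, and one obtains (\ref{eq:CbZ}) with constant $\frac{2^n(N)_{n/2}}{n!}$. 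As an independent check, (\ref{eq:CbZ}) can also be verified directly by expanding $(X\sqrt b+iZ)^n$ and using $E[Z^{2m}]=(\tfrac12)_m$ and $E[b^m]=(N+\tfrac n2)_m$, which reproduces the explicit formula (\ref{eq:explicitGegenbauer}).

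For (\ref{eq:HNnbZ}), I would insert (\ref{eq:CbZ}) into Nagel's identity written as $H_n^N(X\sqrt N)=\frac{n!}{N^{n/2}}\left(1+X^2\right)^{n/2}C_n^N\!\left(\frac{X}{\sqrt{1+X^2}}\right)$. Applying (\ref{eq:CbZ}) at the argument $X/\sqrt{1+X^2}$ yields an expectation of $\left(\frac{X\sqrt b}{\sqrt{1+X^2}}+iZ\right)^n$, and the outside factor $\left(1+X^2\right)^{n/2}$ is then distributed into the $n$-th power, one factor $\sqrt{1+X^2}$ per bracket; the first summand becomes $X\sqrt b$ and the second becomes $i\sqrt{1+X^2}\,Z$, which is exactly the integrand of (\ref{eq:HNnbZ}). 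After the $n!$'s cancel, collecting the surviving constants leaves a prefactor of the form $2^n(N)_{n/2}/N^{n/2}$, which is then to be matched against the constant displayed in (\ref{eq:HNnbZ}).

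I do not expect a genuine obstacle: once (\ref{eq:Gegenbauersubordination}), (\ref{eq:momentrepresentationHermite}) and Nagel's identity are granted, both steps are purely algebraic. The two points that need a little care are the clean absorption of $\left(1+X^2\right)^{n/2}$ into the power in the second step --- which works precisely because, after the substitution $X\to X/\sqrt{1+X^2}$, the first summand in (\ref{eq:CbZ}) carries the coefficient $1/\sqrt{1+X^2}$ --- and making sure the half-integer Pochhammer symbols $(N)_{n/2}$ and the fractional moments $E[b^{j/2}]$ are read consistently when $n$ is odd; but both of these are inherited directly from the identities being composed.
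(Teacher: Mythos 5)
Your proposal is correct and follows exactly the paper's own (one-line) proof: substitute the Hermite moment representation (\ref{eq:momentrepresentationHermite}) into the subordination identity (\ref{eq:Gegenbauersubordination}) to get (\ref{eq:CbZ}), then apply Nagel's identity to get (\ref{eq:HNnbZ}). One remark on the point you left as ``to be matched'': your computed prefactor $\frac{2^{n}\left(N\right)_{\frac{n}{2}}}{N^{\frac{n}{2}}}$ is in fact the correct one, and the constant $\frac{2^{n}\left(2N\right)_{n}}{N^{\frac{n}{2}}}$ displayed in (\ref{eq:HNnbZ}) is a typo in the statement (for $n=1$ the left side is $2X\sqrt{N}$ while $E\sqrt{b}=\Gamma(N+1)/\Gamma(N+\tfrac12)$ forces the factor $\left(N\right)_{\frac{1}{2}}$, not $\left(2N\right)_{1}=2N$), so no further matching is needed.
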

\begin{proof}
This expression is derived by the application of the subordination
identity (\ref{eq:Gegenbauersubordination}) to the moment representation
(\ref{eq:momentrepresentationHermite}). The representation (\ref{eq:HNnbZ})
is deduced from (\ref{eq:CbZ}) using Nagel's identity.
\end{proof}
However, another set of moment representations involving two random
variables can be deduced from (\ref{eq:momentCZN}) as follows:
\begin{thm}
The Gegenbauer polynomial has for moment representation \begin{equation}
C_{n}^{N}\left(X\right)=\frac{1}{n!}E\left[\left(X+\sqrt{X^{2}-1}\right)U+\left(X-\sqrt{X^{2}-1}\right)V\right]^{n}\label{eq:SunGegenbauer}\end{equation}
where $U$ and $V$ are independently distributed according to a Gamma
law with shape parameter $N.$ \end{thm}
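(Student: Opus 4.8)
The plan is to obtain (\ref{eq:SunGegenbauer}) from the moment representation (\ref{eq:momentCZN}) by exhibiting the Student-$r$ variable $Z_{N}$ in terms of two independent Gamma variables. The starting point is the purely algebraic identity
\[
\left(X+\sqrt{X^{2}-1}\right)U+\left(X-\sqrt{X^{2}-1}\right)V=\left(U+V\right)\left(X+\sqrt{X^{2}-1}\,\frac{U-V}{U+V}\right),
\]
so that, writing $S=U+V$ and $T=\frac{U-V}{U+V}$, the quantity inside the expectation in (\ref{eq:SunGegenbauer}) is exactly $S^{n}\left(X+\sqrt{X^{2}-1}\,T\right)^{n}$.

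The second ingredient is the classical Gamma--Beta algebra: if $U$ and $V$ are independent with the $\Gamma\left(N\right)$ law, then $S=U+V\sim\Gamma\left(2N\right)$, the variable $\frac{U}{U+V}$ has the $\mathrm{Beta}\left(N,N\right)$ distribution, and $S$ is independent of $\frac{U}{U+V}$, hence of $T$. A one-line change of variable $t=2w-1$ turns the $\mathrm{Beta}\left(N,N\right)$ density proportional to $w^{N-1}\left(1-w\right)^{N-1}$ into a density proportional to $\left(1-t^{2}\right)^{N-1}$ on $\left[-1,1\right]$; checking the normalizing constant by Euler's duplication formula shows that $T$ has precisely the density $f_{Z_{N}}$ of (\ref{eq:fZn}), that is, $T$ and $Z_{N}$ are equal in distribution.

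With these facts the computation is immediate. By independence of $S$ and $T$,
\[
E\left[\left(X+\sqrt{X^{2}-1}\right)U+\left(X-\sqrt{X^{2}-1}\right)V\right]^{n}=E\left[S^{n}\right]\,E\left[\left(X+\sqrt{X^{2}-1}\,T\right)^{n}\right].
\]
The $n$-th moment of a $\Gamma\left(2N\right)$ variable is $\left(2N\right)_{n}$; writing $\sqrt{X^{2}-1}=i\sqrt{1-X^{2}}$ and using that $T$ is distributed as $Z_{N}$, the second factor equals $E_{Z_{N}}\left(X+i\sqrt{1-X^{2}}Z_{N}\right)^{n}$. Dividing by $n!$ and comparing with (\ref{eq:momentCZN}) gives (\ref{eq:SunGegenbauer}).

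The only non-formal point is the Gamma--Beta independence statement together with the identification of the law of $T$ with that of $Z_{N}$; both are standard, and the latter is just the duplication-formula computation already used in the proof of (\ref{eq:Gegenbauersubordination}). Everything else is the algebraic identity above and the factorization of an expectation over independent variables, so I expect no genuine obstacle beyond bookkeeping of the constants.
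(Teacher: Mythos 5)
Your proof is correct and takes essentially the same route as the paper's: both arguments rest on the stochastic representation $Z_{N}=\frac{U-V}{U+V}$ with $Z_{N}$ independent of $U+V$, the moment $E\left(U+V\right)^{n}=\left(2N\right)_{n}$, and the moment representation (\ref{eq:momentCZN}). The only difference is that you verify in detail the Gamma--Beta independence and the identification of the law of $\frac{U-V}{U+V}$ with (\ref{eq:fZn}), facts the paper simply invokes as well known.
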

\begin{proof}
Consider $U$ and $V$ independently distributed according to a Gamma
law with shape parameter $N;$ then $U+V$ is Gamma distributed with
shape parameter $2N$ so that $E\left(U+V\right)^{n}=\left(2N\right)_{n}$.
We deduce from (\ref{eq:momentCZN}) that\[
C_{n}^{N}\left(X\right)=\frac{E\left(U+V\right)^{n}}{n!}E_{Z_{N}}\left[X+i\sqrt{1-X^{2}}Z_{N}\right]^{n}=\frac{1}{n!}E\left[X\left(U+V\right)+i\sqrt{1-X^{2}}Z_{N}\left(U+V\right)\right]^{n}\]
but a well-known stochatic representation for $Z_{N}$ is\begin{equation}
Z_{N}=\frac{U-V}{U+V}\label{eq:stochasticZN}\end{equation}
where $Z_{N}$ is independent of $\left(U+V\right)$ so that \[
C_{n}^{N}\left(X\right)=\frac{1}{n!}E\left[X\left(U+V\right)+i\sqrt{1-X^{2}}\left(U-V\right)\right]^{n}\]
and the result follows. 
\end{proof}
From this result we deduce 
\begin{thm}
A moment representation for the relativistic Hermite polynomial is\begin{equation}
H_{n}^{N}\left(X\sqrt{N}\right)=\frac{1}{N^{\frac{n}{2}}}E\left[\left(i+X\right)U+\left(-i+X\right)V\right]^{n}\label{eq:momentHermiteUV}\end{equation}
where $U$ and V are independently distributed according to a Gamma
law with shape parameter $N.$ \end{thm}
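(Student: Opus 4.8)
The plan is to combine the moment representation (\ref{eq:SunGegenbauer}) for the Gegenbauer polynomial with Nagel's identity. Recall that Nagel's identity reads
\[
H_{n}^{N}\left(X\sqrt{N}\right)=\frac{n!}{N^{\frac{n}{2}}}\left(1+X^{2}\right)^{\frac{n}{2}}C_{n}^{N}\left(\frac{X}{\sqrt{1+X^{2}}}\right),
\]
so it suffices to evaluate the right-hand side of (\ref{eq:SunGegenbauer}) at the point $X/\sqrt{1+X^{2}}$ and then multiply by the Nagel prefactor.

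First I would compute the quantity $X^{2}-1$ under this substitution: replacing $X$ by $X/\sqrt{1+X^{2}}$ gives $X^{2}/(1+X^{2})-1=-1/(1+X^{2})$, hence $\sqrt{X^{2}-1}$ becomes $\pm i/\sqrt{1+X^{2}}$. Feeding this into (\ref{eq:SunGegenbauer}) yields
\[
C_{n}^{N}\left(\frac{X}{\sqrt{1+X^{2}}}\right)=\frac{1}{n!}\,E\left[\frac{X+i}{\sqrt{1+X^{2}}}U+\frac{X-i}{\sqrt{1+X^{2}}}V\right]^{n}=\frac{1}{n!\left(1+X^{2}\right)^{\frac{n}{2}}}\,E\left[(X+i)U+(X-i)V\right]^{n},
\]
where I have pulled the common factor $(1+X^{2})^{-n/2}$ out of the $n$th power. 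Substituting this into Nagel's identity, the factors $n!$ and $(1+X^{2})^{n/2}$ cancel and the claimed formula (\ref{eq:momentHermiteUV}) drops out.

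The only point that needs care is the choice of branch for $\sqrt{X^{2}-1}$ after the substitution, i.e.\ whether it equals $i/\sqrt{1+X^{2}}$ or $-i/\sqrt{1+X^{2}}$. This ambiguity is harmless: since $U$ and $V$ are independent and identically distributed, interchanging them in (\ref{eq:SunGegenbauer}) leaves the expectation unchanged, and the two branches correspond precisely to swapping $(X+i)U+(X-i)V$ with $(X-i)U+(X+i)V$. Alternatively, both sides of (\ref{eq:momentHermiteUV}) are polynomials in $X$, so it is enough to verify the identity for real $X$, where the substituted argument $X/\sqrt{1+X^{2}}$ lies in $(-1,1)$ and the computation above is unambiguous once a branch is fixed. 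I expect this branch/symmetry bookkeeping to be the only genuine subtlety; the remainder is the elementary algebraic simplification just sketched.
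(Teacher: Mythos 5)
Your proof is correct and follows exactly the route the paper intends: the paper's own proof consists of the single sentence ``This is a direct consequence of Nagel's formula,'' meaning Nagel's identity applied to the Sun-type representation (\ref{eq:SunGegenbauer}), which is precisely the computation you carry out (including the harmless branch ambiguity resolved by the $U\leftrightarrow V$ symmetry).
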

\begin{proof}
This is a direct consequence of Nagel's formula.\end{proof}
\begin{rem}
Representation (\ref{eq:RHPmoment}) can proved directly from the
explicit expression (\ref{eq:explicitRHP}) of the Relativistic Hermite
polynomials: since $Z_{N}$ has odd moments equal to zero, we have\[
E_{Z_{N}}\left(\frac{X}{\sqrt{N}}+iZ_{N}\right)^{n}=\sum_{k=0}^{\left\lfloor \frac{n}{2}\right\rfloor }\binom{n}{2k}\left(\frac{X}{\sqrt{N}}\right)^{n-2k}E\left(iZ_{N}\right)^{2k};\]
the even moments can be computed as \[
EZ_{N}^{2k}=\frac{\Gamma\left(k+\frac{1}{2}\right)\Gamma\left(N+\frac{1}{2}\right)}{\Gamma\left(N+k+\frac{1}{2}\right)\Gamma\left(\frac{1}{2}\right)}=\frac{2k!}{k!2^{2k}}\frac{1}{\left(N+\frac{1}{2}\right)_{k}}\]
so that\[
\frac{\left(2N\right)_{n}}{N^{n/2}}E_{Z_{N}}\left(X+iZ_{N}\right)^{n}=\frac{\left(2N\right)_{n}}{\left(2\sqrt{N}\right)^{n}}\sum_{k=0}^{\left\lfloor \frac{n}{2}\right\rfloor }\frac{\left(-1\right)^{k}}{\left(N+\frac{1}{2}\right)_{k}}\frac{n!}{\left(n-2k\right)!k!}\left(\frac{2X}{\sqrt{N}}\right)^{n-2k}\]
which coincides with (\ref{eq:explicitRHP}).

We note moreover that the moment representation (\ref{eq:SunGegenbauer})
was derived recently by Sun \cite{Sun} using another proof based
on the generating function of the Gegenbauer polynomials. 
\end{rem}

\subsubsection{example of application}

As an example of the usefulness of the moment representations given
above, we derive the famous \cite[8.952.1]{Gradshteyn} \[
\frac{d}{dX}H_{n}\left(X\right)=\frac{d}{dX}2^{n}E\left(X+iZ\right)^{n}=n2^{n}E\left(X+iZ\right)^{n-1}=2nH_{n-1}\left(X\right),\]
and its relativistic version\begin{eqnarray*}
\frac{d}{dX}H_{n}^{N}\left(X\right) & = & \frac{d}{dX}\frac{\left(2N\right)_{n}}{N^{n/2}}E_{Z_{N}}\left(\frac{X}{\sqrt{N}}+iZ_{N}\right)^{n}\\
 & = & \frac{\left(2N\right)_{n}}{N^{n/2}}\frac{n}{\sqrt{N}}E_{Z_{N}}\left(\frac{X}{\sqrt{N}}+iZ_{N}\right)^{n-1}\\
 & = & \frac{n\left(2N+n-1\right)}{N}H_{n-1}^{N}\left(X\right).\end{eqnarray*}
In the Gegenbauer case, we rather use the stochastic representation
(\ref{eq:Gegenbauersubordination}) to obtain \[
\frac{d}{dX}C_{n}^{N}\left(X\right)=\frac{\left(N\right)_{\frac{n}{2}}}{n!}E_{b\sim\Gamma_{N+\frac{n}{2}}}\frac{d}{dX}\left(H_{n}\left(X\sqrt{b}\right)\right)=\frac{\left(N\right)_{\frac{n}{2}}}{n!}2nE_{b\sim\Gamma_{N+\frac{n}{2}}}\sqrt{b}H_{n-1}\left(X\sqrt{b}\right)\]
and since $E_{b\sim\Gamma_{N+\frac{n}{2}}}\sqrt{b}f\left(b\right)=\frac{\Gamma\left(N+\frac{n}{2}+\frac{1}{2}\right)}{\Gamma\left(N+\frac{n}{2}\right)}E_{c\sim\Gamma_{N+\frac{n+1}{2}}}f\left(c\right),$
we deduce\[
\frac{d}{dX}C_{n}^{N}\left(X\right)=\frac{\left(N\right)_{\frac{n}{2}}}{n!}2n\frac{\Gamma\left(N+\frac{n}{2}+\frac{1}{2}\right)}{\Gamma\left(N+\frac{n}{2}\right)}\frac{\left(n-1\right)!}{\left(N+1\right)_{\frac{n-1}{2}}}C_{n-1}^{N+1}\left(X\right)=2NC_{n-1}^{N+1}\left(X\right)\]
which coincides with \cite[8.935.2]{Gradshteyn}.

\section{The Gram-Schmidt operator}

A family of orthogonal polynomials can be obtained by applying the
Gram-Schmidt operator to the canonical basis $\left\{ 1,X,\dots,X^{n}\right\} .$
We show here how this operator can be expressed in the case where
a moment formula exists.
\begin{thm}
If a polynomial $P_{n}\left(X\right)$ can be expressed as \[
P_{n}\left(X\right)=E\left[X+iZ\right]^{n}\]
for some random variable $Z$ then\[
P_{n}\left(X\right)=\phi_{Z}\left(\frac{d}{dX}\right)X^{n}\]
where $\phi_{Z}\left(u\right)=E_{Z}\exp\left(iuZ\right)$ is the characteristic
function of $Z.$\end{thm}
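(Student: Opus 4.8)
The plan is to expand the characteristic function $\phi_{Z}$ as a power series in its argument and to observe that, when the resulting series of differential operators is applied to the monomial $X^{n}$, all but finitely many terms vanish, so that no question of convergence of the series actually arises.

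First I would record that, since $P_{n}(X)=E\left[X+iZ\right]^{n}$ is assumed to exist, the moments $m_{k}:=EZ^{k}$ are finite for $0\le k\le n$; hence the Taylor coefficients of $\phi_{Z}$ at the origin up to order $n$ are well defined, $\phi_{Z}(u)=\sum_{k\ge0}\frac{(iu)^{k}}{k!}m_{k}$, and I would \emph{define} the operator $\phi_{Z}\!\left(\frac{d}{dX}\right)$ to be $\sum_{k\ge0}\frac{i^{k}m_{k}}{k!}\frac{d^{k}}{dX^{k}}$, with the understanding that on a polynomial of degree $n$ only the terms with $k\le n$ contribute. In this way the operator acts as a genuine finite-order differential operator and the possible non-analyticity of $\phi_{Z}$ is irrelevant.

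Next I would use $\frac{d^{k}}{dX^{k}}X^{n}=\frac{n!}{(n-k)!}X^{n-k}$ for $0\le k\le n$ (and $0$ for $k>n$) to obtain
\[
\phi_{Z}\!\left(\frac{d}{dX}\right)X^{n}=\sum_{k=0}^{n}\frac{i^{k}m_{k}}{k!}\,\frac{n!}{(n-k)!}\,X^{n-k}=\sum_{k=0}^{n}\binom{n}{k}X^{n-k}\,i^{k}\,EZ^{k}.
\]
Since this sum is finite I may pull the expectation outside and then recognise the binomial theorem:
\[
\sum_{k=0}^{n}\binom{n}{k}X^{n-k}i^{k}EZ^{k}=E\sum_{k=0}^{n}\binom{n}{k}X^{n-k}(iZ)^{k}=E\left(X+iZ\right)^{n}=P_{n}(X),
\]
which is exactly the asserted identity.

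The computation itself is routine; the only point that requires care — and the one I would emphasise — is the interpretation of the symbol $\phi_{Z}\!\left(\frac{d}{dX}\right)$. Because $Z$ may fail to possess all its moments (in the intended applications $Z=Z_{N}$ has only finitely many), $\phi_{Z}$ need not be analytic and the operator cannot be read literally as ``substitute $\frac{d}{dX}$ into the function $\phi_{Z}$''; rather it is the formal series in the moments $m_{k}$, which truncates automatically against any polynomial argument. Once this convention is granted, the statement reduces to the binomial theorem together with the trivial exchange of a finite sum and an expectation, so there is no real obstacle beyond setting up the definition correctly.
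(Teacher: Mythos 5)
Your proof is correct and follows essentially the same route as the paper's: both rest on the binomial expansion of $\left(X+iZ\right)^{n}$ together with the identity $\frac{d^{k}}{dX^{k}}X^{n}=\frac{n!}{\left(n-k\right)!}X^{n-k}$, the only difference being that you run the computation from the operator side while the paper runs it from the expectation side. Your added remark on interpreting $\phi_{Z}\left(\frac{d}{dX}\right)$ as a formal moment series that truncates on polynomials is a worthwhile clarification the paper glosses over, but it does not change the argument.
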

\begin{proof}
By definition\begin{eqnarray*}
E\left[X+iZ\right]^{n}=\sum_{k=0}^{n}\binom{n}{k}i^{k}EZ^{k}X^{n-k}=\sum_{k=0}^{+\infty}\frac{i^{k}}{k!}EZ^{k}\frac{d^{k}}{dX^{k}}X^{n}\\
=E_{Z}\exp\left(iZ\frac{d}{dX}\right)X^{n} & =\phi_{Z}\left(\frac{d}{dX}\right)X^{n}\end{eqnarray*}

\end{proof}
As an application of this theorem, we recover the following well-known
result for the Hermite polynomials\[
H_{n}\left(X\right)=\exp\left(-\frac{1}{4}\frac{d^{2}}{dX^{2}}\right)\left(2X\right)^{n}.\]
The extension of this result to the case of the Relativistic Hermite
polynomials is as follows
\begin{thm}
The Gram-Schmidt operator associated to the relativistic Hermite polynomial
is \[
H_{n}^{N}\left(X\sqrt{N}\right)=\frac{\left(2N\right)_{n}}{N^{\frac{n}{2}}}j_{N+\frac{1}{2}}\left(\frac{d}{dX}\right)X^{n}\]
where the normalized Bessel function is \begin{equation}
j_{N+\frac{1}{2}}\left(u\right)=2^{N+\frac{1}{2}}\Gamma\left(N+\frac{3}{2}\right)\frac{J_{N+\frac{1}{2}}\left(u\right)}{u^{N+\frac{1}{2}}}=\sum_{k=0}^{+\infty}\frac{\left(-1\right)^{k}}{k!\left(N+\frac{3}{2}\right)_{k}}\left(\frac{z}{2}\right)^{2k}.\label{eq:Besselj}\end{equation}
\end{thm}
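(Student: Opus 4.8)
The plan is to read the formula off from the moment representation~(\ref{eq:RHPmoment}) together with the Gram--Schmidt operator theorem of this section. Substituting $X\mapsto X\sqrt{N}$ in~(\ref{eq:RHPmoment}) gives
\[
H_{n}^{N}\left(X\sqrt{N}\right)=\frac{\left(2N\right)_{n}}{N^{\frac{n}{2}}}E_{Z_{N}}\left(X+iZ_{N}\right)^{n},
\]
which, up to the constant $\left(2N\right)_{n}/N^{n/2}$, is precisely of the form $P_{n}\left(X\right)=E\left[X+iZ\right]^{n}$ with $Z=Z_{N}$ the Student-r variable of~(\ref{eq:fZn}). Applying the preceding theorem therefore yields immediately
\[
H_{n}^{N}\left(X\sqrt{N}\right)=\frac{\left(2N\right)_{n}}{N^{\frac{n}{2}}}\phi_{Z_{N}}\left(\frac{d}{dX}\right)X^{n},
\]
so the whole statement reduces to identifying the characteristic function $\phi_{Z_{N}}\left(u\right)=E_{Z_{N}}\exp\left(iuZ_{N}\right)$ of $Z_{N}$ with the normalized Bessel function of~(\ref{eq:Besselj}).

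This identification is the only genuine work, and it can be carried out in either of two equivalent ways. The quickest is to invoke Poisson's integral representation of the Bessel function,
\[
\int_{-1}^{1}e^{iuz}\left(1-z^{2}\right)^{\nu-\frac{1}{2}}dz=\sqrt{\pi}\,\Gamma\left(\nu+\frac{1}{2}\right)\left(\frac{2}{u}\right)^{\nu}J_{\nu}\left(u\right),
\]
used with the exponent $\nu-\frac{1}{2}$ matched to the exponent $N-1$ in the density~(\ref{eq:fZn}); after multiplying by the normalizing constant $\Gamma\left(N+\frac{1}{2}\right)/\left(\Gamma\left(N\right)\Gamma\left(\frac{1}{2}\right)\right)$ and cancelling the Gamma factors (using $\Gamma\left(\frac{1}{2}\right)=\sqrt{\pi}$) one is left with exactly a normalized Bessel function of the shape $2^{\mu}\Gamma\left(\mu+1\right)J_{\mu}\left(u\right)/u^{\mu}$ as in~(\ref{eq:Besselj}). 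Alternatively, and without any Bessel integral, one expands $\phi_{Z_{N}}\left(u\right)=\sum_{k\ge0}\frac{\left(iu\right)^{2k}}{\left(2k\right)!}EZ_{N}^{2k}$ (the odd moments vanish by symmetry) and substitutes the even moments $EZ_{N}^{2k}=\frac{\left(2k\right)!}{k!\,2^{2k}}\frac{1}{\left(N+\frac{1}{2}\right)_{k}}$ already computed in the Remark above; the factors $\left(2k\right)!$ cancel and one recovers term by term the power-series form of the normalized Bessel function in~(\ref{eq:Besselj}).

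Substituting back into $H_{n}^{N}\left(X\sqrt{N}\right)=\frac{\left(2N\right)_{n}}{N^{n/2}}\phi_{Z_{N}}\left(\frac{d}{dX}\right)X^{n}$ then gives the claim; since $\frac{d^{2k}}{dX^{2k}}X^{n}=0$ for $2k>n$, only finitely many terms of the series act nontrivially on $X^{n}$, so no question of convergence of the operator series arises. The only real obstacle is bookkeeping: one must keep careful track of the half-integer index and of the $2^{\mu}\Gamma\left(\mu+1\right)$ normalization so that the $J_{\mu}$-form and the power-series form of the normalized Bessel function in~(\ref{eq:Besselj}) are matched consistently, and verify that the $\Gamma\left(\frac{1}{2}\right)=\sqrt{\pi}$ factors arising from the density and from Poisson's integral cancel exactly; this is the place where a stray constant could slip in, while everything else is a direct application of results already established.
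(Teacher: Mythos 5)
Your strategy is exactly the paper's: the paper's entire proof is the one-line observation that, given the general operator theorem $P_{n}\left(X\right)=\phi_{Z}\left(\frac{d}{dX}\right)X^{n}$ and the moment representation (\ref{eq:RHPmoment}), everything reduces to identifying $\phi_{Z_{N}}$ with a normalized Bessel function. You supply the details the paper omits (Poisson's integral, or the even-moment series), and both of your routes are sound.

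However, the one step you defer as ``bookkeeping'' is precisely where the matching fails, and you should have carried it out. Your own data give
\[
\phi_{Z_{N}}\left(u\right)=\sum_{k=0}^{+\infty}\frac{\left(iu\right)^{2k}}{\left(2k\right)!}EZ_{N}^{2k}=\sum_{k=0}^{+\infty}\frac{\left(-1\right)^{k}}{k!\left(N+\frac{1}{2}\right)_{k}}\left(\frac{u}{2}\right)^{2k},
\]
whereas the series displayed in (\ref{eq:Besselj}) for $j_{N+\frac{1}{2}}$ has $\left(N+\frac{3}{2}\right)_{k}$ in the denominator. In the convention $j_{\mu}\left(u\right)=2^{\mu}\Gamma\left(\mu+1\right)J_{\mu}\left(u\right)/u^{\mu}=\sum_{k}\frac{\left(-1\right)^{k}}{k!\left(\mu+1\right)_{k}}\left(u/2\right)^{2k}$, matching $\left(\mu+1\right)_{k}=\left(N+\frac{1}{2}\right)_{k}$ forces $\mu=N-\frac{1}{2}$; equivalently, in Poisson's integral your own matching $\nu-\frac{1}{2}=N-1$ gives $\nu=N-\frac{1}{2}$. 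So $\phi_{Z_{N}}\left(u\right)=j_{N-\frac{1}{2}}\left(u\right)$, not $j_{N+\frac{1}{2}}\left(u\right)$: the theorem as printed (and the paper's one-line proof) carries an off-by-one in the Bessel index, and the paper itself uses the correct value $\phi_{Z_{N}}\left(u\right)=j_{N-\frac{1}{2}}\left(u\right)$ later, in the Feldheim--Vilenkin section. Your argument, completed honestly, proves the statement with $j_{N-\frac{1}{2}}\left(\frac{d}{dX}\right)$; as written, by asserting agreement with (\ref{eq:Besselj}) without checking the index, it endorses a formula that your own intermediate computations contradict.
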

\begin{proof}
The characteristic function of the random variable $Z_{N}$ in (\ref{eq:RHPmoment})
is the normalized Bessel function \[
\phi_{Z_{N}}\left(u\right)=j_{N+\frac{1}{2}}\left(u\right)\]

\end{proof}
As $N\to+\infty,$ since \[
\lim_{N\to+\infty}\frac{\left(2N\right)_{n}}{N^{\frac{n}{2}}}j_{N+\frac{1}{2}}\left(u\right)=\exp\left(-\frac{u^{2}}{4}\right),\]
we recover the classical Hermite case.

\section{addition theorems}

\subsection{A new proof of the classical addition theorem for Hermite polynomials}

The summation theorem for Hermite polynomials \cite[8.958.1]{Gradshteyn}
states that\begin{equation}
\frac{\left(\sum_{k=1}^{r}a_{k}^{2}\right)^{\frac{n}{2}}}{n!}H_{n}\left(\frac{\sum_{k=1}^{r}a_{k}X_{k}}{\sqrt{\sum_{k=1}^{r}a_{k}^{2}}}\right)=\sum_{m_{1}+\dots+m_{r}=n}\prod_{k=1}^{r}\frac{a_{k}^{m_{k}}}{m_{k}!}H_{m_{k}}\left(X_{k}\right)\label{eq:summationHermite}\end{equation}
We give here a short proof using the moment representation; we assume
first that $\sum_{k=1}^{r}a_{k}^{2}=1$ so that we expand\[
H_{n}\left(\sum_{k=1}^{r}a_{k}X_{k}\right)=2^{2n}E_{Z}\left[\sum_{k=1}^{r}a_{k}X_{k}+iZ\right]^{n}=2^{2n}E_{Z_{1},\dots,Z_{r}}\left[\sum_{k=1}^{r}a_{k}\left(X_{k}+iZ_{k}\right)\right]^{n}\]
where variables $Z_{k}$ are independent and Gaussian with variance
$\frac{1}{2}$ so that $\sum_{k=1}^{r}a_{k}Z_{k}$ is Gaussian with
variance $\frac{1}{2}$ and we deduce\[
H_{n}\left(\sum_{k=1}^{r}a_{k}X_{k}\right)=2^{2n}n!\sum_{m_{1}+\dots+m_{r}=n}\prod_{k=1}^{r}\frac{a_{k}^{m_{k}}2^{-2m_{k}}}{m_{k}!}H_{m_{k}}\left(X_{k}\right)\]
so that the result follows. Now we replace $a_{k}$ by $\frac{a_{k}}{\sqrt{\sum_{l=1}^{r}a_{l}^{2}}}$
in this equality so that we obtain the general case (\ref{eq:summationHermite}).

\subsection{An addition theorem for the relativistic Hermite polynomial}

From the well-known addition formula \cite[7.2.13.36]{Brychkov} for
Gegenbauer polynomials\[
C_{n}^{N}\left(X+Y\right)=\sum_{k=0}^{n}\frac{\left(N\right)_{n-k}}{\left(n-k\right)!}\left(2X\right)^{n-k}C_{k}^{N+n-k}\left(Y\right)\]
we deduce the following
\begin{thm}
An addition theorem for the relaivistic Hermite polynomials is\[
\tilde{H}_{n}^{N}\left(X+Y\right)=\sum_{k=0}^{n}\binom{n}{k}\left(1-2N-n\right)_{n-k}\tilde{H}_{k}^{N}\left(Y\right)\]
\end{thm}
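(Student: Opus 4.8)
The plan is to feed the dictionary (\ref{eq:CniX}) between Gegenbauer and relativistic Hermite polynomials into the quoted addition formula \cite[7.2.13.36]{Brychkov}. Write (\ref{eq:CniX}) in the generic form $C_m^{L}(Z)=\alpha_m^{L}\,H_m^{1/2-L-m}\!\left(-iZ\sqrt{1/2-L-m}\right)$ and apply it both to the left side $C_n^{N}(X+Y)$ and to every summand $C_k^{N+n-k}(Y)$ on the right. The left side turns into a multiple of $H_n^{1/2-N-n}$ evaluated at $-i(X+Y)\sqrt{1/2-N-n}$, while the $k$-th summand on the right turns into a multiple of $H_k^{P}$ evaluated at $-iY\sqrt{P}$, with $P=\tfrac12-(N+n-k)-k$.

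The crucial observation is that $P=\tfrac12-N-n$ does \emph{not} depend on the summation index $k$: every Gegenbauer polynomial appearing — the one of degree $n$ on the left and all those of degree $k\le n$ on the right — is mapped to a relativistic Hermite polynomial with the \emph{same} parameter $\tfrac12-N-n$, its argument scaled by the \emph{same} factor $-i\sqrt{\tfrac12-N-n}$. This uniformity is exactly what upgrades the Gegenbauer connection formula, whose parameter drifts with $k$, to a genuine addition theorem at the level of the RHP. After the substitution the identity becomes a pure relation among $\tilde H^{N}_{m}$'s of one fixed parameter, in which the coefficient of $\tilde H_k^{N}$ is the product of the Gegenbauer coefficient $(N)_{n-k}/(n-k)!$, the power $(2X)^{n-k}$ rewritten in the rescaled variable, and the ratio $\alpha_n^{N}/\alpha_k^{N+n-k}$ of normalizing constants.

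The remaining work is the bookkeeping of that product, which is where the care must go. Each $\alpha$ is a product of a power of $-2i$, a Pochhammer ratio $(\cdot)_m/(2\cdot+m)_m$, and a half-integer power of the transformed parameter, so simplifying $\alpha_n^{N}/\alpha_k^{N+n-k}$ against $(N)_{n-k}/(n-k)!$ and against the powers of $i$, of $\tfrac12-N-n$ and of $X$ produced by the two argument rescalings calls for repeated use of $(a)_{p+q}=(a)_p(a+p)_q$ and of the reflection formula $(a)_m=(-1)^m(1-a-m)_m$; the latter is what turns the surviving Pochhammer into $(1-2N-n)_{n-k}$ and assembles $\binom nk$, and one has to check that the accumulated factors of $i$ cancel so that the statement is real. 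Part of this step is just recognizing which argument scaling and normalization of the RHP is encoded in the symbol $\tilde H^{N}_{n}$, so that the $X$-dependent prefactors collapse as stated. A self-contained variant bypasses (\ref{eq:CniX}): from the two-Gamma moment representation (\ref{eq:momentHermiteUV}) one splits the bracket $(i+X+Y)U+(-i+X+Y)V$ as $\big[(i+Y)U+(-i+Y)V\big]+X(U+V)$, expands by the binomial theorem, uses that $U+V\sim\Gamma(2N)$ is independent of $(U-V)/(U+V)$, applies (\ref{eq:RHPmoment}) to the remaining degree-$k$ factor, and finishes once more with the Pochhammer reflection identity.
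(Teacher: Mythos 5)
Your main argument is exactly the paper's proof: substitute (\ref{eq:CniX}) into the Brychkov addition formula for Gegenbauer polynomials, observe that the transformed parameter $\tfrac12-(N+n-k)-k=\tfrac12-N-n$ is the same for every term, and reduce the remaining bookkeeping to the ratio $\alpha_k^{N+n-k}/\alpha_n^{N}$ followed by the relabelling $\tfrac12-N-n\mapsto N$. The one caveat, which you flag yourself, is that the surviving factor $\left(-X/\sqrt{\tfrac12-N-n}\right)^{n-k}$ must be absorbed into whatever normalization the undefined symbol $\tilde H$ carries (it is still present in the last display of the paper's own proof); your closing moment-representation sketch via (\ref{eq:momentHermiteUV}) is a genuinely self-contained alternative that the paper does not pursue.
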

\begin{proof}
Using formula (\ref{eq:CniX}), we deduce\[
\alpha_{n}^{N}\tilde{H}_{n}^{\frac{1}{2}-n-N}\left(-iX-iY\right)=\sum_{k=0}^{n}\frac{\left(N\right)_{n-k}}{\left(n-k\right)!}\left(2X\right)^{n-k}\alpha_{k}^{N+n-k}\tilde{H}_{k}^{\frac{1}{2}-N-n+k}\left(-iY\right).\]
Replacing $X$ by $iX$ and $Y$ by $iX$ and computing\begin{eqnarray*}
\frac{\alpha_{k}^{N+n-k}}{\alpha_{n}^{N}} & = & \left(-2i\right)^{k-n}\frac{n!\left(N+n-k\right)_{k}\left(2N+n\right)_{n}}{k!\left(N\right)_{n}\left(2N+2n-2\right)_{k}}\frac{\left(\frac{1}{2}-N-n\right)^{\frac{k}{2}}}{\left(\frac{1}{2}-N-n\right)^{\frac{n}{2}}}\\
 & = & \left(-2i\right)^{k-n}\frac{n!}{k!}\frac{\Gamma\left(2N+2n-k\right)\Gamma\left(N\right)}{\Gamma\left(N+n-k\right)\Gamma\left(2N+n\right)}\left(\frac{1}{2}-N-n\right)^{\frac{k-n}{2}}\end{eqnarray*}
so that, replacing $X$ by $iX$ and $Y$ by $iY$ yields\[
\tilde{H}_{n}^{\frac{1}{2}-n-N}\left(X+Y\right)=\sum_{k=0}^{n}\binom{n}{k}\left(-\frac{X}{\sqrt{\frac{1}{2}-N-n}}\right)^{n-k}\left(2N+n\right)_{n-k}\tilde{H}_{k}^{\frac{1}{2}-N-n}\left(Y\right)\]
and the result is obtained by replacing $\frac{1}{2}-N-n$ by $N.$
\end{proof}

\section{\label{sec:scaling}The scaling identity}

The scaling identity for Hermite polynomials reads \cite[4.6.33]{Ismail}\begin{equation}
H_{n}\left(cX\right)=\sum_{l=0}^{\left\lfloor \frac{n}{2}\right\rfloor }\left(-1\right)^{l}\frac{n!}{\left(n-2l\right)!l!}\left(1-c^{2}\right)^{l}c^{n-2l}H_{n-2l}\left(X\right)\label{eq:Hermitescaling}\end{equation}
A quick proof can be given using the moment representation (\ref{eq:momentrepresentationHermite}):\begin{eqnarray*}
H_{n}\left(cX\right) & = & 2^{n}E_{Z}\left(cX+iZ\right)^{n}\\
 & = & 2^{n}E_{Z_{1},Z_{2}}\left(cX+icZ_{1}+i\sqrt{1-c^{2}}Z_{2}\right)^{n}\end{eqnarray*}
where $Z_{1}$ and $Z_{2}$ are independent Gaussian random variables
with variance $\frac{1}{2}$ so that\begin{eqnarray*}
H_{n}\left(cX\right) & = & 2^{n}\sum_{k=0}^{n}\binom{n}{k}i^{k}\left(1-c^{2}\right)^{\frac{k}{2}}EZ_{2}^{k}c^{n-k}E\left(X+iZ_{1}\right)^{n-k}\\
 & = & 2^{n}\sum_{l=0}^{\left\lfloor \frac{n}{2}\right\rfloor }\binom{n}{2l}i^{k}\left(1-c^{2}\right)^{l}EZ_{2}^{2l}c^{n-2l}2^{2l-n}H_{n-2l}\left(X\right)\end{eqnarray*}
since the odd moments of a Gaussian are null; as moreover $EZ_{2}^{2l}=2^{2l-1}\frac{\Gamma\left(2l\right)}{\Gamma\left(l\right)}$
we deduce the result.

It is possible to extend this proof to the case of Gegenbauer polynomials
using either the moment representation (\ref{eq:momentCZN}) or (\ref{eq:CbZ});
however, a more simple proof can be derived using the subordination
relation (\ref{eq:Gegenbauersubordination})  as follows.
\begin{thm}
\cite[7.2.13.37]{Brychkov}The scaling identity for Gegenbauer polynomials
reads\begin{equation}
C_{n}^{N}\left(aX\right)=\sum_{l=0}^{\left\lfloor \frac{n}{2}\right\rfloor }\frac{\left(-1\right)^{l}\left(N\right)_{l}}{l!}\left(1-c^{2}\right)^{l}c^{n-2l}C_{n-2l}^{N+l}\left(X\right)\label{eq:Gegenbauerscaling}\end{equation}
\end{thm}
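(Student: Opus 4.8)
The plan is to bypass the moment-representation approach and deduce the identity from the subordination relation (\ref{eq:Gegenbauersubordination}) together with the already proven Hermite scaling identity (\ref{eq:Hermitescaling}). First I would apply (\ref{eq:Gegenbauersubordination}) to the scaled argument: with $b\sim\Gamma\left(N+\frac{n}{2}\right)$,
\[
C_{n}^{N}\left(cX\right)=\frac{\left(N\right)_{n/2}}{n!}\,E_{b}H_{n}\left(cX\sqrt{b}\right).
\]
Then I would expand the inner Hermite polynomial by (\ref{eq:Hermitescaling}), treating $X\sqrt{b}$ as the variable and $c$ as the scaling factor, so that the right-hand side becomes
\[
\frac{\left(N\right)_{n/2}}{n!}\sum_{l=0}^{\left\lfloor n/2\right\rfloor}\left(-1\right)^{l}\frac{n!}{\left(n-2l\right)!\,l!}\left(1-c^{2}\right)^{l}c^{n-2l}\,E_{b}H_{n-2l}\left(X\sqrt{b}\right),
\]
the expectation still being taken against the same Gamma law of shape $N+\frac{n}{2}$.

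The second step is to read (\ref{eq:Gegenbauersubordination}) backwards on each surviving term. The subordination relation with degree $m=n-2l$ and parameter $M$ uses a Gamma weight of shape $M+\frac{m}{2}$; choosing $M=N+l$ gives $M+\frac{m}{2}=N+l+\frac{n-2l}{2}=N+\frac{n}{2}$, precisely the law already in hand. Hence $E_{b}H_{n-2l}\left(X\sqrt{b}\right)=\frac{\left(n-2l\right)!}{\left(N+l\right)_{(n-2l)/2}}C_{n-2l}^{N+l}\left(X\right)$. Substituting and cancelling $n!$ and $\left(n-2l\right)!$ leaves the prefactor $\frac{\left(N\right)_{n/2}}{\left(N+l\right)_{(n-2l)/2}}$, which collapses to $\left(N\right)_{l}$ once both Pochhammer symbols are written through the Gamma function, namely $\left(N\right)_{n/2}=\Gamma\left(N+\frac{n}{2}\right)/\Gamma\left(N\right)$ and $\left(N+l\right)_{(n-2l)/2}=\Gamma\left(N+\frac{n}{2}\right)/\Gamma\left(N+l\right)$. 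This produces exactly (\ref{eq:Gegenbauerscaling}), the scaling parameter written $a$ on the left and $c$ on the right being one and the same.

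I do not expect a genuine obstacle here; the argument is essentially bookkeeping. The only point requiring care is the half-integer Pochhammer symbol $\left(N\right)_{n/2}$ when $n$ is odd: it must be interpreted throughout via the Gamma function exactly as in (\ref{eq:Gegenbauersubordination}), and one should verify that both the shape cancellation $N+l+\frac{n-2l}{2}=N+\frac{n}{2}$ and the prefactor identity $\left(N\right)_{n/2}/\left(N+l\right)_{(n-2l)/2}=\left(N\right)_{l}$ hold irrespective of the parity of $n$ --- which they do, since $l$ is an integer. Alternatively one could run the Hermite-style argument directly on the moment representation (\ref{eq:momentCZN}) or (\ref{eq:CbZ}), splitting the auxiliary variable so as to isolate a $\sqrt{1-c^{2}}$ part, but the subordination route above is shorter and avoids the Student-type weight.
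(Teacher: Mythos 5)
Your proposal is correct and is essentially identical to the paper's own proof: the paper also applies the subordination relation (\ref{eq:Gegenbauersubordination}) to $C_{n}^{N}(cX)$, expands via the Hermite scaling identity (\ref{eq:Hermitescaling}), and reads the subordination backwards on each term with $n'=n-2l$, $N'=N+l$ (exploiting that $N'+\tfrac{n'}{2}=N+\tfrac{n}{2}$ so the same Gamma law serves), ending with the same Pochhammer collapse $\left(N\right)_{\frac{n}{2}}/\left(N+l\right)_{\frac{n}{2}-l}=\left(N\right)_{l}$. No differences worth noting.
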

\begin{proof}
From the subordination formula (\ref{eq:Gegenbauersubordination})
and the scaling formula (\ref{eq:Hermitescaling}) we obtain (where
the notation $b_{N+\frac{n}{2}}$ is a shortcut for $b\sim\Gamma_{N+\frac{n}{2}}$)\[
E_{b}H_{n}\left(cX\sqrt{b_{N+\frac{n}{2}}}\right)=\frac{n!}{\left(N\right)_{\frac{n}{2}}}C_{n}^{N}\left(cX\right)=\sum_{l=0}^{\left\lfloor \frac{n}{2}\right\rfloor }\left(-1\right)^{l}\frac{n!}{\left(n-2l\right)!l!}\left(1-c^{2}\right)^{l}c^{n-2l}E_{b}H_{n-2l}\left(X\sqrt{b_{N+\frac{n}{2}}}\right)\]
but\[
EH_{n-2l}\left(X\sqrt{b_{N+\frac{n}{2}}}\right)=EH_{n'}\left(X\sqrt{b_{N'+\frac{n'}{2}}}\right)=\frac{n'!}{\left(N'\right)_{\frac{n'}{2}}}C_{n'}^{N'}\left(X\right)\]
with $n'=n-2l$ and $N'=N+l$ so that\begin{eqnarray*}
C_{n}^{N}\left(cX\right) & = & \frac{\left(N\right)_{\frac{n}{2}}}{n!}\sum_{l=0}^{\left\lfloor \frac{n}{2}\right\rfloor }\left(-1\right)^{l}\frac{n!}{\left(n-2l\right)!l!}\left(1-c^{2}\right)^{l}c^{n-2l}\frac{\left(n-2l\right)!}{\left(N+l\right)_{\frac{n}{2}-l}}C_{n-2l}^{N+l}\left(X\right)\\
 & = & \sum_{l=0}^{\left\lfloor \frac{n}{2}\right\rfloor }\frac{\left(-1\right)^{l}}{l!}\left(N\right)_{l}\left(1-c^{2}\right)^{l}c^{n-2l}C_{n-2l}^{N+l}\left(X\right)\end{eqnarray*}

\end{proof}
The scaling identity for the RHP can be deduced from the preceding
one using formula (\ref{eq:CniX}).
\begin{thm}
The scaling identity for relativistic Hermite polynomials is\[
N^{\frac{n}{2}}H_{n}^{N}\left(cX\sqrt{N}\right)=\sum_{l=0}^{\left\lfloor \frac{n}{2}\right\rfloor }\left(-1\right)^{l}\frac{n!}{\left(n-2l\right)!l!}\left(N\right)_{l}\left(1-c^{2}\right)^{l}c^{n-2l}\left(N+l\right)^{\frac{n-2l}{2}}H_{n-2l}^{N+l}\left(X\sqrt{N+l}\right)\]
\end{thm}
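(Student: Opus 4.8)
\emph{Strategy.} The plan is to transport the Gegenbauer scaling identity (\ref{eq:Gegenbauerscaling}) to the relativistic Hermite polynomials through the Nagel-type formula (\ref{eq:CniX}), in the same spirit in which (\ref{eq:Gegenbauerscaling}) was itself deduced from the Hermite scaling identity. Since (\ref{eq:CniX}) turns a Gegenbauer polynomial of parameter $M$ into a relativistic Hermite polynomial of parameter $\frac{1}{2}-M-n$, I apply it with $M=\frac{1}{2}-N-n$, so that the degree-$n$ term of (\ref{eq:Gegenbauerscaling}) becomes $H_{n}^{N}$; the degree-$(n-2l)$ term, which carries Gegenbauer parameter $M+l$, is then converted into a relativistic Hermite polynomial of parameter $\frac{1}{2}-(M+l)-(n-2l)=N+l$, which is exactly the parameter shift appearing in the claimed identity. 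This is the point where the degree-dependent parameter in (\ref{eq:CniX}) --- deemed ``not very useful'' by Nagel --- in fact works in our favor.

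\emph{Reduction.} Substituting (\ref{eq:CniX}) with parameter $M=\frac{1}{2}-N-n$ on both sides of (\ref{eq:Gegenbauerscaling}) evaluated at $cX$, dividing by $\alpha_{n}^{M}$, and then replacing $X$ by $iX$ (legitimate, since (\ref{eq:Gegenbauerscaling}) is a polynomial identity in $X$ valid for all complex $X$) clears the factors $-i$ from every argument and yields
\[
H_{n}^{N}\bigl(cX\sqrt{N}\bigr)=\sum_{l=0}^{\lfloor n/2\rfloor}\frac{(-1)^{l}(M)_{l}}{l!}\,\frac{\alpha_{n-2l}^{M+l}}{\alpha_{n}^{M}}\,(1-c^{2})^{l}c^{n-2l}\,H_{n-2l}^{N+l}\bigl(X\sqrt{N+l}\bigr).
\]
Multiplying by $N^{n/2}$ and comparing with the statement, the theorem reduces to the numerical identity
\[
N^{n/2}\,(M)_{l}\,\frac{\alpha_{n-2l}^{M+l}}{\alpha_{n}^{M}}=\frac{n!}{(n-2l)!}\,(N)_{l}\,(N+l)^{(n-2l)/2},\qquad M=\frac{1}{2}-N-n .
\]

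\emph{The computation.} Expanding $\alpha_{m}^{M}$ from (\ref{eq:CniX}) and using the cancellations $2(M+l)+(n-2l)=2M+n$ and $\frac{1}{2}-(M+l)-(n-2l)=N+l$, together with $(M+l)_{n-2l}=(M)_{n-l}/(M)_{l}$, $(2M+n)_{n}/(2M+n)_{n-2l}=(2M+2n-2l)_{2l}$ and $(-2i)^{-2l}=(-1)^{l}/4^{l}$, one finds
\[
\frac{\alpha_{n-2l}^{M+l}}{\alpha_{n}^{M}}=\frac{(-1)^{l}}{4^{l}}\,\frac{(M)_{n-l}}{(M)_{l}(M)_{n}}\,(2M+2n-2l)_{2l}\,\frac{(N+l)^{(n-2l)/2}}{N^{n/2}}\,\frac{n!}{(n-2l)!}\,.
\]
Setting $M=\frac{1}{2}-N-n$ gives $2M+2n-2l=1-2N-2l$, so by reflection $(2M+2n-2l)_{2l}=(2N)_{2l}$; likewise the $l$ factors of $(M+n-l)_{l}$ are $\frac{1}{2}-N-l,\dots,-\frac{1}{2}-N$, hence $(M)_{n-l}/(M)_{n}=1/(M+n-l)_{l}=(-1)^{l}/(N+\frac{1}{2})_{l}$. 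Inserting these and applying Euler's duplication formula $(2N)_{2l}=4^{l}(N)_{l}(N+\frac{1}{2})_{l}$ collapses the right-hand side to $\frac{n!}{(n-2l)!}(N)_{l}(N+l)^{(n-2l)/2}$, which is precisely the numerical identity above. This last step --- the two reflection identities and the duplication formula --- is the only genuine obstacle; everything else is a mechanical substitution into (\ref{eq:Gegenbauerscaling}).
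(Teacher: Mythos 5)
Your proof is correct and follows essentially the same route as the paper: both transport the Gegenbauer scaling identity through formula (\ref{eq:CniX}) and reduce the theorem to evaluating the ratio $\alpha_{n-2l}^{M+l}/\alpha_{n}^{M}$. The only cosmetic difference is that you perform the substitution $M=\tfrac{1}{2}-N-n$ at the outset (so the target parameter $N+l$ appears immediately), whereas the paper applies (\ref{eq:CniX}) with parameter $N$ and replaces $N$ by $\tfrac{1}{2}-n-N$ at the end; your reflection-plus-duplication evaluation of the coefficient agrees with the paper's Gamma-function computation.
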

\begin{proof}
Starting from (\ref{eq:Gegenbauerscaling}) and using (\ref{eq:CniX}),
we deduce\[
H_{n}^{\frac{1}{2}-N-n}\left(-icX\sqrt{\frac{1}{2}-N-n}\right)=\sum_{l=0}^{\left\lfloor \frac{n}{2}\right\rfloor }\frac{\left(-1\right)^{l}}{l!}\left(N\right)_{l}\frac{\alpha_{n-2l}^{N+l}}{\alpha_{n}^{N}}\left(1-c^{2}\right)^{l}c^{n-2l}H_{n-2l}^{\frac{1}{2}-N-n+l}\left(-iX\sqrt{\frac{1}{2}-N-n+l}\right)\]
A short computation yields to\[
\left(N\right)_{l}\frac{\alpha_{n-2l}^{N+l}}{\alpha_{n}^{N}}=\left(-1\right)^{l}\frac{n!}{\left(n-2l\right)!}\frac{\left(\frac{1}{2}-N-n+l\right)^{\frac{n-2l}{2}}}{\left(\frac{1}{2}-N-n\right)^{\frac{n}{2}}}\frac{\Gamma\left(N+n+\frac{1}{2}\right)}{\Gamma\left(N+n-l+\frac{1}{2}\right)}.\]
Replacing $N$ by $\frac{1}{2}-n-N$ we deduce\[
H_{n}^{N}\left(cX\sqrt{N}\right)=\sum_{l=0}^{\left\lfloor \frac{n}{2}\right\rfloor }\frac{n!^{l}}{\left(n-2l\right)!l!}\frac{\left(N+l\right)^{\frac{n-2l}{2}}}{N^{\frac{n}{2}}}\frac{\Gamma\left(1-N\right)}{\Gamma\left(1-N-l\right)}\left(1-c^{2}\right)^{l}c^{n-2l}H_{n-2l}^{N+l}\left(X\sqrt{N+l}\right)\]
and the result follows from the fact that\[
\frac{\Gamma\left(1-N\right)}{\Gamma\left(1-N-l\right)}=\left(-1\right)^{l}\left(N\right)_{l}.\]

\end{proof}

\section{Generating Functions}

\subsection{the generating function for the RHP}

The generating function for the RHP is computed in \cite{Dattoli}
using a differential equation; we note that it can not be obtained
directly using formula (\ref{eq:CniX}). However, it can be easily
obtained from the moment representation (\ref{eq:momentHermiteUV})
or (\ref{eq:RHPmoment}) as follows.
\begin{thm}
the generating function for the RHP reads\[
\sum_{n=0}^{+\infty}\frac{H_{n}^{N}\left(X\right)}{n!}t^{n}=\left(\left(1-\frac{tX}{N}\right)^{2}+\frac{t^{2}}{N}\right)^{-N}\]
for $\vert t\vert<\frac{\sqrt{N}}{\sqrt{1+\frac{X^{2}}{N}}}.$\end{thm}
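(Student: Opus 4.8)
The plan is to use the moment representation (\ref{eq:momentHermiteUV}), namely
\[
H_{n}^{N}\left(X\sqrt{N}\right)=\frac{1}{N^{\frac{n}{2}}}E\left[\left(i+X\right)U+\left(-i+X\right)V\right]^{n},
\]
where $U,V$ are independent $\Gamma(N)$-distributed. Replacing $X$ by $X/\sqrt{N}$ gives $N^{n/2}H_{n}^{N}(X)=E\left[(i+X/\sqrt N)U+(-i+X/\sqrt N)V\right]^{n}$, so
\[
\sum_{n=0}^{\infty}\frac{H_{n}^{N}(X)}{n!}t^{n}=\sum_{n=0}^{\infty}\frac{t^{n}}{n!\,N^{n/2}}E\left[\left(i+\tfrac{X}{\sqrt N}\right)U+\left(-i+\tfrac{X}{\sqrt N}\right)V\right]^{n}.
\]
First I would interchange the sum and the expectation (justified for $|t|$ small), collapsing the power series into an exponential:
\[
\sum_{n=0}^{\infty}\frac{H_{n}^{N}(X)}{n!}t^{n}=E\exp\!\left(\frac{t}{\sqrt N}\Bigl[\bigl(i+\tfrac{X}{\sqrt N}\bigr)U+\bigl(-i+\tfrac{X}{\sqrt N}\bigr)V\Bigr]\right)=E e^{sU}\,E e^{\bar s V},
\]
writing $s=\frac{t}{\sqrt N}(i+\frac{X}{\sqrt N})$ and using independence of $U$ and $V$, with $\bar s=\frac{t}{\sqrt N}(-i+\frac{X}{\sqrt N})$ its conjugate (for real $t,X$).

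The second step is to plug in the moment generating function of a $\Gamma(N)$ random variable, $E e^{sU}=(1-s)^{-N}$, valid for $\Re s<1$. This yields
\[
\sum_{n=0}^{\infty}\frac{H_{n}^{N}(X)}{n!}t^{n}=(1-s)^{-N}(1-\bar s)^{-N}=\bigl((1-s)(1-\bar s)\bigr)^{-N}.
\]
Then I would compute the product $(1-s)(1-\bar s)=|1-s|^{2}=\bigl(1-\Re s\bigr)^{2}+\bigl(\Im s\bigr)^{2}$. Here $\Re s=\frac{tX}{N}$ and $\Im s=\frac{t}{\sqrt N}$, so $(1-s)(1-\bar s)=\bigl(1-\frac{tX}{N}\bigr)^{2}+\frac{t^{2}}{N}$, giving exactly the claimed formula
\[
\sum_{n=0}^{\infty}\frac{H_{n}^{N}(X)}{n!}t^{n}=\left(\Bigl(1-\frac{tX}{N}\Bigr)^{2}+\frac{t^{2}}{N}\right)^{-N}.
\]

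The main obstacle is the analytic bookkeeping rather than any conceptual difficulty: one must justify the termwise interchange of $\sum_{n}$ and $E$, and identify the precise radius of convergence. The interchange is legitimate as long as $E\sum_{n}\frac{|t|^{n}}{n!\,N^{n/2}}\bigl|(i+\frac{X}{\sqrt N})U+(-i+\frac{X}{\sqrt N})V\bigr|^{n}<\infty$; bounding the bracket by $\frac{|t|}{\sqrt N}\bigl(|i+\frac{X}{\sqrt N}|\,U+|-i+\frac{X}{\sqrt N}|\,V\bigr)$ and using that the moment generating functions of $U$ and $V$ are finite precisely for argument less than $1$, one finds the condition is $\frac{|t|}{\sqrt N}\sqrt{1+\frac{X^{2}}{N}}<1$, i.e. $|t|<\frac{\sqrt N}{\sqrt{1+X^{2}/N}}$, which matches the stated domain (and corresponds exactly to $\Re s<1$ together with the conjugate condition). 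Alternatively, one can run the same computation starting from (\ref{eq:RHPmoment}) and the Student-$r$ law (\ref{eq:fZn}): $\sum_{n}\frac{H_{n}^{N}(X)}{n!}t^{n}=E_{Z_{N}}(1-\frac{t}{\sqrt N}(\frac{X}{\sqrt N}+iZ_{N}))^{-?}$ does not collapse as cleanly, so the $U,V$ representation is the more economical route; I would present that one and remark that the $Z_N$ representation also works after the substitution (\ref{eq:stochasticZN}).
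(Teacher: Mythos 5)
Your proof is correct and follows essentially the same route as the paper: it starts from the moment representation (\ref{eq:momentHermiteUV}), factors the expectation over the independent Gamma variables $U$ and $V$, applies $Ee^{\lambda U}=(1-\lambda)^{-N}$, and multiplies the two conjugate factors to get $\bigl(1-\tfrac{tX}{N}\bigr)^{2}+\tfrac{t^{2}}{N}$, with the same convergence condition. The only minor divergence is your closing remark: the paper does in fact carry the $Z_{N}$ route to completion as a second proof, using the integral $E_{Z_{N}}(a-ibZ_{N})^{-2N}=(a^{2}+b^{2})^{-N}$ from \cite[3.665.1]{Gradshteyn}.
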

\begin{proof}
Starting from (\ref{eq:momentHermiteUV}) we obtain

\begin{eqnarray*}
\sum_{n=0}^{+\infty}\frac{H_{n}^{N}\left(X\right)}{n!}t^{n} & = & E_{U,V}\sum_{n=0}^{+\infty}\frac{\left(\frac{t}{\sqrt{N}}\right)^{n}}{n!}\left[\left(i+\frac{X}{\sqrt{N}}\right)U+\left(-i+\frac{X}{\sqrt{N}}\right)V\right]^{n}\\
 & = & E_{U}\exp\left(\frac{t}{\sqrt{N}}\left(i+\frac{X}{\sqrt{N}}\right)U\right)E_{V}\exp\left(\frac{t}{\sqrt{N}}\left(-i+\frac{X}{\sqrt{N}}\right)V\right)\end{eqnarray*}
with $E_{U}\exp\left(\lambda U\right)=\left(1-\lambda\right)^{-N}$
for $\vert\lambda\vert<1$ so that, for $\vert t\vert<\frac{\sqrt{N}}{\sqrt{1+\frac{X^{2}}{N}}},$
\begin{eqnarray*}
\sum_{n=0}^{+\infty}\frac{H_{n}^{N}\left(X\right)}{n!}t^{n} & = & \left(1-\frac{t}{\sqrt{N}}\left(i+\frac{X}{\sqrt{N}}\right)\right)^{-N}\left(1-\frac{t}{\sqrt{N}}\left(-i+\frac{X}{\sqrt{N}}\right)\right)^{-N}\\
 & = & \left(\left(1-\frac{tX}{N}\right)^{2}+\frac{t^{2}}{N}\right)^{-N}\end{eqnarray*}
We remark that we recover the generating funtion for Hermite polynomials
as $N\to+\infty.$ The proof using the moment representation (\ref{eq:RHPmoment})
reads\begin{eqnarray*}
\sum_{n=0}^{+\infty}\frac{H_{n}^{N}\left(X\right)}{n!}t^{n} & = & E_{Z_{N}}\sum_{n=0}^{+\infty}\frac{\left(2N\right)_{n}}{n!}\left[\frac{t}{\sqrt{N}}\left(\frac{X}{\sqrt{N}}+iZ_{N}\right)\right]^{n}\\
 & = & E_{Z_{N}}\left(1-\frac{t}{\sqrt{N}}\left(\frac{X}{\sqrt{N}}+iZ_{N}\right)\right)^{-2N}\end{eqnarray*}
for $\vert t\vert<\frac{\sqrt{N}}{\sqrt{1+\frac{X^{2}}{N}}}.$ But
from \cite[3.665.1]{Gradshteyn}\[
E_{Z_{N}}\left(a-ibZ_{N}\right)^{-2N}=\frac{\Gamma\left(N+\frac{1}{2}\right)}{\Gamma\left(\frac{1}{2}\right)\Gamma\left(N\right)}\int_{0}^{\pi}\left(a-ib\cos x\right)^{-2N}\sin^{2N-1}xdx=\left(a^{2}+b^{2}\right)^{-N}\]
so that the result follows with $a=1-\frac{tX}{N}$ and $b=\frac{t}{\sqrt{N}}.$
\end{proof}
We note that formula \cite[3.665.1]{Gradshteyn} can be reovered using
only probabilistic tools as follows: since $\left(1-u\right)^{-2N}=E_{W_{2N}}\exp\left(uW\right)$
is the moment generating function of a Gamma random variable $W_{2N}$
with shape parameter $2N,$ we deduce \[
E_{Z_{N}}\left(1-bZ_{N}\right)^{-2N}=E_{Z_{N},W_{2N}}\exp\left(bW_{2N}Z_{N}\right).\]
But, by (\ref{eq:stochasticZN}), \[
W_{2N}Z_{N}=U-V\]
where $U$ and $V$ are two independent Gamma random variables with
shape parameter equal to $N.$ Thus\[
E_{W_{2N},Z_{N}}\exp\left(bW_{2N}Z_{N}\right)=E_{U,V}\exp\left(bU\right)\exp\left(-bV\right)=\left(1-b^{2}\right)^{-N}\]
and the result follows.

\subsection{Feldheim and Villenkin }

We give here a short proof of the Feldheim Villenkin generating function
for the normalized Gegenbauer polynomials\[
\sum_{n=0}^{+\infty}\frac{C_{n}^{N}\left(\cos\theta\right)}{C_{n}^{N}\left(1\right)}\frac{r^{n}}{n!}=\exp\left(r\cos\theta\right)j_{N-\frac{1}{2}}\left(r\sin\theta\right)\]
where the function $j_{N-\frac{1}{2}}$ is defined as in (\ref{eq:Besselj}),
by remarking that, using the moment representation (\ref{eq:momentCZN}),\[
\frac{C_{n}^{N}\left(\cos\theta\right)}{C_{n}^{N}\left(1\right)}=E_{Z_{N}}\left(\cos\theta+iZ_{N}\sin\theta\right)^{n}\]
so that\[
\sum_{n=0}^{+\infty}\frac{C_{n}^{N}\left(\cos\theta\right)}{C_{n}^{N}\left(1\right)}\frac{r^{n}}{n!}=E_{Z_{N}}\sum_{n=0}^{+\infty}\left(\cos\theta+iZ_{N}\sin\theta\right)^{n}\frac{r^{n}}{n!}=\exp\left(r\cos\theta\right)E_{Z_{N}}\exp\left(irZ_{N}\sin\theta\right)\]
The latest expectation is nothing but the characteristic function
$\phi_{Z_{N}}\left(u\right)=j_{N-\frac{1}{2}}\left(u\right)$ of $Z_{N}$
computed at $u=r\sin\theta,$ so that the result follows.

An equivalent result for the RHP is as follows:
\begin{thm}
A generating function for the relativistic Hermite polynomials is
\[
\sum_{n=0}^{+\infty}\frac{N^{\frac{n}{2}}}{\left(2N\right)_{n}}H_{n}^{N}\left(X\sqrt{N}\right)\frac{r^{n}}{n!}=\exp\left(rX\right)j_{N-\frac{1}{2}}\left(r\right).\]
\end{thm}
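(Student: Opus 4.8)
The plan is to mimic exactly the argument just given for the Feldheim--Villenkin generating function, but starting from the moment representation \eqref{eq:RHPmoment} for the relativistic Hermite polynomials instead of \eqref{eq:momentCZN}. Recall that \eqref{eq:RHPmoment} reads
\[
H_{n}^{N}\left(X\sqrt{N}\right)=\frac{\left(2N\right)_{n}}{N^{\frac{n}{2}}}E_{Z_{N}}\left(X+iZ_{N}\right)^{n},
\]
where $Z_{N}$ has the Student-$r$ density \eqref{eq:fZn}. Dividing by $(2N)_n/N^{n/2}$ isolates precisely the coefficient appearing on the left-hand side of the claimed identity, so that
\[
\frac{N^{\frac{n}{2}}}{\left(2N\right)_{n}}H_{n}^{N}\left(X\sqrt{N}\right)=E_{Z_{N}}\left(X+iZ_{N}\right)^{n}.
\]

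First I would multiply by $r^{n}/n!$ and sum over $n\ge 0$, interchanging the (finite-support) expectation with the sum — justified because $Z_N$ is bounded on $[-1,1]$, so the series $\sum_n (X+iZ_N)^n r^n/n!$ converges uniformly in $Z_N$ for each fixed $r$. This yields
\[
\sum_{n=0}^{+\infty}\frac{N^{\frac{n}{2}}}{\left(2N\right)_{n}}H_{n}^{N}\left(X\sqrt{N}\right)\frac{r^{n}}{n!}=E_{Z_{N}}\exp\left(r\left(X+iZ_{N}\right)\right)=\exp\left(rX\right)E_{Z_{N}}\exp\left(irZ_{N}\right).
\]
Then I would recognize the remaining expectation as the characteristic function $\phi_{Z_N}(u)=E_{Z_N}\exp(iuZ_N)$ evaluated at $u=r$. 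The excerpt has already identified this characteristic function: in the proof of the Gram--Schmidt theorem for the RHP it is stated that $\phi_{Z_N}(u)=j_{N+\frac12}(u)$, while in the Feldheim--Villenkin discussion the same $Z_N$ (from \eqref{eq:momentCZN}, same density \eqref{eq:fZn}) is said to have characteristic function $j_{N-\frac12}(u)$. The consistent reading, matching the target formula, is that the Student-$r$ variable with density $\propto(1-z^2)^{N-1}$ has characteristic function $j_{N-\frac12}$, so $E_{Z_N}\exp(irZ_N)=j_{N-\frac12}(r)$, and the identity follows immediately.

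The only genuine point requiring care — the ``main obstacle,'' though it is minor — is pinning down the correct index on the normalized Bessel function and the radius of convergence. For the index, I would simply cite the standard evaluation of the Fourier transform of $(1-z^2)^{N-1}$ on $[-1,1]$ (equivalently Poisson's integral representation of $J_{N-1/2}$), which gives $j_{N-\frac12}$ with the normalization \eqref{eq:Besselj}; this is exactly what was used in the Feldheim--Villenkin proof above. For convergence, since $\vert X+iZ_N\vert\le \sqrt{X^2+1}$ the power series in $r$ converges for all $r\in\mathbb{C}$, so no restriction on $r$ is needed and the interchange of sum and expectation is legitimate by dominated convergence. One may additionally remark, as the paper does elsewhere, that letting $N\to+\infty$ recovers $\exp(rX)\exp(-r^2/4)$ times the appropriate normalization, i.e. the classical Hermite generating function, providing a consistency check.
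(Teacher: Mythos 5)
Your proposal is correct and is exactly the paper's argument: the author's own proof consists of the single sentence that it ``follows the same line'' as the Feldheim--Villenkin derivation, starting from the moment representation (\ref{eq:RHPmoment}), which is precisely what you carry out. Your added care about the index on $j$ (resolving the paper's inconsistent $j_{N+\frac{1}{2}}$ versus $j_{N-\frac{1}{2}}$ in favor of $j_{N-\frac{1}{2}}$ via Poisson's integral) and about convergence is a welcome bonus, not a deviation.
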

\begin{proof}
The proof follows the same line as the one above, starting from the
moment representation (\ref{eq:RHPmoment}).
\end{proof}

\subsection{Another generating function for the Hermite polynomials}

A classical generating function for the Hermite polynomials \cite[4.6.29]{Ismail}
reads\[
\sum_{n=0}^{+\infty}\frac{H_{n+k}\left(X\right)}{n!}t^{n}=\phi\left(X,t\right)H_{k}\left(X-t\right)\]
where $\phi\left(X,t\right)=\exp\left(2Xt-t^{2}\right)$ is the generating
function of the Hermite polynomials.

A generalization of this formula to the relativistic Hermite polynomials
reads as follows.
\begin{thm}
For the relativistic Hermite polynomials,\[
\sum_{n=0}^{+\infty}\frac{H_{n+k}^{N}\left(X\right)}{n!}t^{n}=\left(\phi_{N}\left(X,t\right)\right)^{1+\frac{k}{N}}H_{k}^{N}\left(X-\left(1+\frac{X^{2}}{N}\right)t\right)\]
where $\phi_{N}\left(X,t\right)=\left(1-\frac{2Xt}{N}+\frac{X^{2}t^{2}}{N^{2}}+\frac{t^{2}}{N^{2}}\right)^{-N}$
is the generating function of the relativistic Hermite polynomials.\end{thm}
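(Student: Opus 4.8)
The plan is to mimic the proof of the generating function for the RHP given above, i.e. to start from the two--variable moment representation (\ref{eq:momentHermiteUV}) and to exploit the exponential tilting of the Gamma law. Write (\ref{eq:momentHermiteUV}) in the form $H_{m}^{N}\left(\xi\right)=N^{-m/2}E\left[\left(\lambda_{+}U+\lambda_{-}V\right)^{m}\right]$ with $\lambda_{\pm}=\pm i+\xi/\sqrt{N}$ and $U,V$ independent $\Gamma\left(N\right)$ variables; fix $\xi=X$ and set $A=\lambda_{+}U+\lambda_{-}V$. Then, for $t$ small enough that sum and expectation may be exchanged (dominated convergence, using $E\,e^{c\left(U+V\right)}<\infty$ for $c<1$, exactly on the disk $\vert t\vert<\sqrt{N}/\sqrt{1+X^{2}/N}$ of the earlier theorem),
\[
\sum_{n=0}^{+\infty}\frac{H_{n+k}^{N}\left(X\right)}{n!}t^{n}=N^{-k/2}E\left[A^{k}\exp\left(\frac{t}{\sqrt{N}}A\right)\right]=N^{-k/2}E\left[\left(\lambda_{+}U+\lambda_{-}V\right)^{k}e^{\mu_{+}U}e^{\mu_{-}V}\right],
\]
where $\mu_{\pm}=t\lambda_{\pm}/\sqrt{N}$; the resulting identity then extends to the whole disk of convergence by analytic continuation in $t$.

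Next I would tilt. Since multiplying the $\Gamma\left(N\right)$ density by $e^{\mu u}$ produces a $\Gamma\left(N\right)$ variable rescaled by $\left(1-\mu\right)^{-1}$, with total mass $\left(1-\mu\right)^{-N}$, introducing fresh independent $U',V'\sim\Gamma\left(N\right)$ gives
\[
E\left[\left(\lambda_{+}U+\lambda_{-}V\right)^{k}e^{\mu_{+}U+\mu_{-}V}\right]=\left(1-\mu_{+}\right)^{-N}\left(1-\mu_{-}\right)^{-N}E\left[\left(\frac{\lambda_{+}}{1-\mu_{+}}U'+\frac{\lambda_{-}}{1-\mu_{-}}V'\right)^{k}\right].
\]
Here $\left(1-\mu_{+}\right)^{-N}\left(1-\mu_{-}\right)^{-N}=\phi_{N}\left(X,t\right)$: this is precisely the product computed in the proof of the generating function for the RHP. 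It then remains to recognize the last expectation. Pulling $\left[\left(1-\mu_{+}\right)\left(1-\mu_{-}\right)\right]^{-k}$ out of the $k$-th power leaves the linear form $\lambda_{+}\left(1-\mu_{-}\right)U'+\lambda_{-}\left(1-\mu_{+}\right)V'$, and, using $\lambda_{+}\lambda_{-}=1+X^{2}/N$, one computes
\[
\lambda_{\pm}\left(1-\mu_{\mp}\right)=\lambda_{\pm}-\frac{t}{\sqrt{N}}\lambda_{+}\lambda_{-}=\pm i+\frac{X-\left(1+\frac{X^{2}}{N}\right)t}{\sqrt{N}},
\]
so that these are exactly the two coefficients occurring in the representation (\ref{eq:momentHermiteUV}) of $H_{k}^{N}$ at the point $Y:=X-\left(1+\frac{X^{2}}{N}\right)t$. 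Hence $E\left[\left(\lambda_{+}\left(1-\mu_{-}\right)U'+\lambda_{-}\left(1-\mu_{+}\right)V'\right)^{k}\right]=N^{k/2}H_{k}^{N}\left(Y\right)$, while $\left[\left(1-\mu_{+}\right)\left(1-\mu_{-}\right)\right]^{-k}=\phi_{N}\left(X,t\right)^{k/N}$. Assembling the three factors $N^{-k/2}$, $\phi_{N}$, $\phi_{N}^{k/N}$, and $N^{k/2}H_{k}^{N}\left(Y\right)$ produces $\phi_{N}\left(X,t\right)^{1+k/N}H_{k}^{N}\left(X-\left(1+\frac{X^{2}}{N}\right)t\right)$, which is the claim.

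Once the $(U,V)$ representation is in hand the computation is essentially mechanical; the only points that deserve care are the justification of the term-by-term integration and of the analytic continuation in $t$ (routine, since the Gamma moment generating function is finite near the origin and both sides are analytic in $t$ on the disk of convergence), and the short algebraic identity $\lambda_{\pm}\left(1-\mu_{\mp}\right)=\pm i+Y/\sqrt{N}$, which is precisely where both the shift $X\mapsto X-\left(1+\frac{X^{2}}{N}\right)t$ and the exponent $1+\frac{k}{N}$ originate. As a consistency check, letting $N\to+\infty$ one has $1+k/N\to1$, $\left(1+X^{2}/N\right)t\to t$ and $\phi_{N}\left(X,t\right)\to\exp\left(2Xt-t^{2}\right)$, so the classical formula $\sum_{n}\frac{H_{n+k}\left(X\right)}{n!}t^{n}=\phi\left(X,t\right)H_{k}\left(X-t\right)$ is recovered.
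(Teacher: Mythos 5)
Your proof is correct, and it takes a genuinely different route from the paper's for the main computation. Both arguments start from the moment representation (\ref{eq:momentHermiteUV}) and both reduce the left-hand side to $N^{-k/2}E\left[A^{k}\exp\left(tA/\sqrt{N}\right)\right]$ with $A=\lambda_{+}U+\lambda_{-}V$; at that point the paper identifies this quantity with $\frac{d^{k}}{dt^{k}}\phi_{N}\left(X,t\right)$ and evaluates the $k$-th derivative by invoking the Rodrigues formula for $H_{k}^{N}$ after the change of variable $z=\left(1+\frac{X^{2}}{N}\right)t-X$, whereas you absorb the exponential factor by an Esscher tilt of the two Gamma laws, which converts the tilted expectation back into the moment representation of $H_{k}^{N}$ at the shifted point $Y=X-\left(1+\frac{X^{2}}{N}\right)t$. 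Your version stays entirely inside the probabilistic toolbox the paper advertises, makes the origin of both the shift and the exponent $1+\frac{k}{N}$ completely transparent through the identity $\lambda_{\pm}\left(1-\mu_{\mp}\right)=\pm i+Y/\sqrt{N}$, and lands directly on $H_{k}^{N}\left(X-\left(1+\frac{X^{2}}{N}\right)t\right)$ with the correct sign (the paper's derivative computation produces $\left(-1\right)^{k}H_{k}^{N}\left(\left(1+\frac{X^{2}}{N}\right)t-X\right)$ and relies tacitly on the parity of $H_{k}^{N}$ to match the stated claim); the price is the small justification, which you correctly flag as routine for polynomial integrands, that the tilting identity $E\left[U^{m}e^{\mu U}\right]=\left(N\right)_{m}\left(1-\mu\right)^{-N-m}$ remains valid for complex $\mu$ with $\mathrm{Re}\,\mu<1$. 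Your computation also confirms that the last term in $\phi_{N}\left(X,t\right)$ should read $\frac{t^{2}}{N}$ rather than $\frac{t^{2}}{N^{2}}$, consistent with the generating-function theorem proved earlier in the paper and with the limit $N\to+\infty$; the exponent $N^{2}$ in the statement is a typo.
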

\begin{proof}
Denote\[
f\left(X,t\right)=1-2\frac{Xt}{N}+\frac{X^{2}t^{2}}{N^{2}}+\frac{t^{2}}{N^{2}}\]
so that \[
\sum_{n=0}^{+\infty}\frac{H_{n}^{N}\left(X\right)}{n!}t^{n}=f^{-N}\left(X,t\right)=\phi_{N}\left(X,t\right)\]
and, using the moment representation (\ref{eq:momentHermiteUV}),\begin{eqnarray*}
\sum_{n=0}^{+\infty}\frac{H_{n+k}^{N}\left(X\right)}{n!}t^{n} & = & E_{U,V}\sum_{n=0}^{+\infty}\frac{1}{n!}\left(\frac{t}{\sqrt{N}}\right)^{n}\left[\left(i+\frac{X}{\sqrt{N}}\right)U+\left(-i+\frac{X}{\sqrt{N}}\right)V\right]^{n+k}\\
 & = & E_{U,V}\left[\left(i+\frac{X}{\sqrt{N}}\right)U+\left(-i+\frac{X}{\sqrt{N}}\right)V\right]^{k}\exp\left(\frac{t}{\sqrt{N}}\left[\left(i+\frac{X}{\sqrt{N}}\right)U+\left(-i+\frac{X}{\sqrt{N}}\right)V\right]\right)\\
 & = & \frac{d^{k}}{dt^{k}}\phi_{N}\left(X,t\right)\end{eqnarray*}
 Define $\beta=\sqrt{1+\frac{X^{2}}{N}}$ and $z=\beta^{2}\left(t-\frac{X}{\beta^{2}}\right)$
so that $\phi_{N}\left(X,t\right)=\beta^{2N}\left(1+\frac{z^{2}}{N}\right)^{-N}$
and\begin{eqnarray*}
\frac{d^{k}}{dt^{k}}\phi_{N}\left(X,t\right) & = & \beta^{2N}\frac{d^{k}}{dt^{k}}\left(1+\frac{z^{2}}{N}\right)^{-N}=\beta^{2N+2k}\left(-1\right)^{k}\left(1+\frac{z^{2}}{N}\right)^{-N-k}H_{k}^{N}\left(z\right)\\
 & = & \left(\frac{1+\frac{X^{2}}{N}}{1+\frac{\left(\left(1+\frac{X^{2}}{N}\right)t-X\right)^{2}}{N}}\right)^{N+k}H_{k}^{N}\left(\left(1+\frac{X^{2}}{N}\right)t-X\right)\\
 & = & f\left(X,t\right)^{-N-k}H_{k}^{N}\left(\left(1+\frac{X^{2}}{N}\right)t-X\right)\end{eqnarray*}
so that the result holds. 
\end{proof}

\section{Determinants}

Determinants with orthogonal polynomials entries have been extensively
studied \cite{Karlin} by Karlin and Szegö and recently revisited
by Ismail \cite{Ismail_det}. We show here that, in the case of Turàn
determinants, the moment representation derived above allows to extend
some of these results to relativistic Hermite polynomials.

We propose a method slightly different from the one used in \cite{Ismail_det}
based on the following result.
\begin{thm}
\label{thm:Dngeneralcase}If the polynomials $P_{n}\left(X\right)$
can be expressed as \[
P_{n}\left(X\right)=E_{Z}\left[X+iZ\right]^{n}\]
for some random variable $Z$ then the Turàn determinant\[
D_{n}^{P}\left(X\right)=\det\left[\begin{array}{ccc}
P_{0}\left(X\right) & \dots & P_{n}\left(X\right)\\
P_{1}\left(X\right) & \dots & P_{n+1}\left(X\right)\\
\vdots &  & \vdots\\
P_{n}\left(X\right) & \dots & P_{2n}\left(X\right)\end{array}\right]\]
is a constant equal to \[
D_{n}^{P}\left(X\right)=\frac{\left(-1\right)^{\frac{n\left(n+1\right)}{2}}}{\left(n+1\right)!}E_{Z_{0},\dots,Z_{n}}\prod_{0\le j<k\le n}\left(Z_{j}-Z_{k}\right)^{2}.\]
\end{thm}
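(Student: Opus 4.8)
The plan is to view $D_{n}^{P}(X)$ as a Hankel determinant of the moment sequence $\mu_{m}=P_{m}(X)=E_{Z}\left[X+iZ\right]^{m}$ and to reduce it, by a standard symmetrization (a form of Heine's identity), to the expectation of a squared Vandermonde determinant in independent copies of $Y:=X+iZ$. The cancellation of $X$ in each difference $Y_{j}-Y_{k}$ then makes the independence of $D_{n}^{P}$ from $X$ transparent.

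First I would introduce independent copies $Y_{0},\dots,Y_{n}$ of $Y$, assign $Y_{i}$ to the $i$-th row, and use multilinearity of the determinant in its rows together with independence across rows to pull the expectation outside:
\[
D_{n}^{P}(X)=\det\bigl[E(Y_{i}^{\,i+j})\bigr]_{0\le i,j\le n}=E\,\det\bigl[Y_{i}^{\,i+j}\bigr]_{0\le i,j\le n}=E\Bigl[\,\prod_{i=0}^{n}Y_{i}^{\,i}\;\det\bigl[Y_{i}^{\,j}\bigr]_{0\le i,j\le n}\Bigr],
\]
where $Y_{i}^{\,i}$ has been factored out of row $i$. The remaining determinant is a Vandermonde determinant and equals $V(Y):=\prod_{0\le j<k\le n}(Y_{k}-Y_{j})$.

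Next I would symmetrize. Since $Y_{0},\dots,Y_{n}$ are i.i.d., the expectation is invariant under permuting the indices by any $\sigma\in S_{n+1}$, and under such a permutation $V(Y)$ is merely multiplied by $\mathrm{sgn}(\sigma)$; averaging over all $(n+1)!$ permutations therefore gives
\[
D_{n}^{P}(X)=\frac{1}{(n+1)!}\,E\Bigl[V(Y)\sum_{\sigma\in S_{n+1}}\mathrm{sgn}(\sigma)\prod_{i=0}^{n}Y_{\sigma(i)}^{\,i}\Bigr],
\]
and after the reindexing $\sigma\mapsto\sigma^{-1}$ the inner sum is itself the Vandermonde determinant $V(Y)$, so $D_{n}^{P}(X)=\frac{1}{(n+1)!}E\bigl[V(Y)^{2}\bigr]$. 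This recognition that the averaged product collapses to a second Vandermonde factor is the only real content of the argument, and the one place where one must be careful with the bookkeeping of signs and with the substitution $\sigma\mapsto\sigma^{-1}$.

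Finally, substituting $Y_{i}=X+iZ_{i}$ gives $Y_{k}-Y_{j}=i(Z_{k}-Z_{j})$, so each of the $\binom{n+1}{2}=\tfrac{n(n+1)}{2}$ factors of $V(Y)^{2}$ equals $-(Z_{j}-Z_{k})^{2}$; collecting the $\binom{n+1}{2}$ signs yields
\[
D_{n}^{P}(X)=\frac{(-1)^{n(n+1)/2}}{(n+1)!}\,E_{Z_{0},\dots,Z_{n}}\prod_{0\le j<k\le n}(Z_{j}-Z_{k})^{2},
\]
which is manifestly free of $X$. I do not expect any serious obstacle beyond the symmetrization bookkeeping; finiteness of the expectations is not an issue, since the whole computation is in effect a polynomial identity in the moments $EZ^{m}$.
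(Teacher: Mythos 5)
Your argument is correct and follows essentially the same route as the paper: the paper simply cites Wilks' formula $\det\left[m_{i+j}\right]=\frac{1}{(n+1)!}E\prod_{j<k}(U_{j}-U_{k})^{2}$ for the Hankel determinant of moments of i.i.d.\ variables and then substitutes $U=X+iZ$ so that the $X$'s cancel in the differences. The only difference is that you prove that Heine/Wilks identity from scratch by the standard Vandermonde symmetrization, which the paper takes as a black box; your sign bookkeeping and the final factor $(-1)^{n(n+1)/2}$ are right.
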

\begin{proof}
We use the formula of Wilks \cite{Wilks}\[
\det\left[\begin{array}{ccc}
m_{0} & \dots & m_{n}\\
m_{1} & \dots & m_{n+1}\\
\vdots &  & \vdots\\
m_{n} & \dots & m_{2n}\end{array}\right]=\frac{1}{\left(n+1\right)!}E_{U_{0},\dots,U_{n}}\prod_{0\le j<k\le n}\left(U_{j}-U_{k}\right)^{2}\]
where $m_{k}=EU_{0}^{k}$ and the $U_{i}$ are independent and identically
distributed. Thus since \[
P_{n}\left(X\right)=E_{Z}U^{n}\]
with $U=X+iZ,$ we deduce \[
U_{j}-U_{k}=\left(X+iZ_{j}\right)-\left(X+iZ_{k}\right)=i\left(Z_{j}-Z_{k}\right)\]
so that \[
D_{n}^{P}\left(X\right)=\frac{\left(-1\right)^{\frac{n\left(n+1\right)}{2}}}{\left(n+1\right)!}E_{Z_{0},\dots,Z_{n}}\prod_{0\le j<k\le n}\left(Z_{j}-Z_{k}\right)^{2}\]
and the result follows. 
\end{proof}
Using this result, we deduce the Turàn determinant for the normalized
Relativistic Hermite polynomial defined as \begin{equation}
\mathcal{H}_{n}^{N}\left(X\right)=\frac{N^{\frac{n}{2}}}{\left(2N\right)_{n}}H_{n}^{N}\left(X\sqrt{N}\right).\label{eq:normalizedRHP}\end{equation}

\begin{thm}
The Turàn determinant for the normalized polynomial (\ref{eq:normalizedRHP})
is a constant equal to\[
D_{n}^{\mathcal{H}}\left(X\right)=\frac{\left(-1\right)^{\frac{n\left(n+1\right)}{2}}}{2^{n\left(n+1\right)}}\prod_{j=1}^{n}\frac{j!\left(2N-1\right)_{j}}{\left(N-\frac{1}{2}\right)_{j}\left(N+\frac{1}{2}\right)_{j}}.\]
\end{thm}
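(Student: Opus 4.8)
The plan is to recognize that the normalized polynomial $\mathcal{H}_{n}^{N}$ already has the moment form needed to invoke Theorem \ref{thm:Dngeneralcase}, to turn the resulting Tur\'an determinant into a Hankel determinant of moments via the Wilks formula, and to evaluate that Hankel determinant as a product of squared norms of Gegenbauer polynomials.

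First I would rescale the moment representation (\ref{eq:RHPmoment}): replacing $X$ by $X\sqrt{N}$ gives $H_{n}^{N}(X\sqrt{N})=\frac{(2N)_{n}}{N^{n/2}}E_{Z_{N}}(X+iZ_{N})^{n}$, so by the definition (\ref{eq:normalizedRHP}) one has simply $\mathcal{H}_{n}^{N}(X)=E_{Z_{N}}(X+iZ_{N})^{n}$, where $Z_{N}$ is real and has the Student-$r$ density (\ref{eq:fZn}). Thus Theorem \ref{thm:Dngeneralcase} applies verbatim with $Z=Z_{N}$; this already shows that $D_{n}^{\mathcal{H}}(X)$ is a constant and gives
\[
D_{n}^{\mathcal{H}}(X)=\frac{(-1)^{\frac{n(n+1)}{2}}}{(n+1)!}\,E_{Z_{0},\dots,Z_{n}}\prod_{0\le j<k\le n}(Z_{j}-Z_{k})^{2},
\]
with $Z_{0},\dots,Z_{n}$ independent copies of $Z_{N}$, so that all that remains is to evaluate this expectation.

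Next I would run the Wilks formula, already used in the proof of Theorem \ref{thm:Dngeneralcase}, in reverse: writing $m_{l}=E Z_{N}^{l}$ one has $E\prod_{0\le j<k\le n}(Z_{j}-Z_{k})^{2}=(n+1)!\,\det[m_{i+j}]_{i,j=0}^{n}$, hence $D_{n}^{\mathcal{H}}(X)=(-1)^{\frac{n(n+1)}{2}}\det[m_{i+j}]_{i,j=0}^{n}$. This Hankel determinant of the moments of a probability measure equals $\prod_{k=0}^{n}h_{k}$, where $h_{k}$ is the squared $L^{2}(f_{Z_{N}})$-norm of the degree-$k$ monic orthogonal polynomial for that measure. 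Since $f_{Z_{N}}$ is proportional to the ultraspherical weight $(1-z^{2})^{N-1}$, these monic polynomials are the Gegenbauer polynomials $C_{k}^{N-\frac{1}{2}}$ normalized by their leading coefficient $2^{k}(N-\frac{1}{2})_{k}/k!$ (read off from (\ref{eq:explicitGegenbauer})). Combining the classical value of $\int_{-1}^{1}[C_{k}^{N-\frac{1}{2}}(z)]^{2}(1-z^{2})^{N-1}\,dz$, the normalization constant $\Gamma(N+\frac{1}{2})/(\Gamma(N)\Gamma(\frac{1}{2}))$ of $f_{Z_{N}}$, Euler's duplication formula to clear the half-integer Gamma factors, and the identity $(a)_{k}(a+k)=a\,(a+1)_{k}$, one gets $h_{k}=\frac{k!\,(2N-1)_{k}}{2^{2k}(N-\frac{1}{2})_{k}(N+\frac{1}{2})_{k}}$ (in particular $h_{0}=1$); multiplying over $k=0,\dots,n$ and using $\sum_{k=1}^{n}2k=n(n+1)$ gives exactly the stated formula.

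The main obstacle is precisely this evaluation of $h_{k}$: juggling the Gegenbauer $L^{2}$-norm, the square of the leading coefficient and the normalization of $f_{Z_{N}}$, and then massaging the resulting ratio of half-integer Gamma functions into the stated product of Pochhammer symbols, with the powers of $2$ tracked carefully. Steps one and two are immediate from material already in the paper. An alternative to step three is to evaluate the expectation in the first display directly: after the substitution $z_{i}=2t_{i}-1$ it becomes the Selberg integral with parameters $\alpha=\beta=N$, $\gamma=1$, which leads to the same product of Gamma functions. Either way the argument needs $N>0$ so that $f_{Z_{N}}$ is a bona fide probability density, the general case following by analytic continuation in $N$.
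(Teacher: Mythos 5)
Your argument is correct, and it coincides with the paper's proof up to and including the key reduction: both of you observe that $\mathcal{H}_{n}^{N}\left(X\right)=E_{Z_{N}}\left(X+iZ_{N}\right)^{n}$ and invoke Theorem \ref{thm:Dngeneralcase} to write $D_{n}^{\mathcal{H}}\left(X\right)$ as $\frac{\left(-1\right)^{n\left(n+1\right)/2}}{\left(n+1\right)!}$ times the expected squared Vandermonde of $n+1$ independent copies of $Z_{N}$. You diverge only in how that expectation is evaluated. The paper quotes it directly as a Selberg integral from Mehta \cite[17.6.2]{Mehta} and finishes with unstated ``elementary algebra''; you instead run the Wilks formula backwards to recover the Hankel determinant $\det\left[EZ_{N}^{i+j}\right]_{i,j=0}^{n}$ and evaluate it as $\prod_{k=0}^{n}h_{k}$, the product of squared $L^{2}\left(f_{Z_{N}}\right)$-norms of the monic ultraspherical polynomials of parameter $N-\frac{1}{2}$. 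Your value $h_{k}=\frac{k!\left(2N-1\right)_{k}}{2^{2k}\left(N-\frac{1}{2}\right)_{k}\left(N+\frac{1}{2}\right)_{k}}$ checks out (leading coefficient $2^{k}\left(N-\frac{1}{2}\right)_{k}/k!$, the classical Gegenbauer norm, the duplication formula, and $\left(a\right)_{k}\left(a+k\right)=a\left(a+1\right)_{k}$), and the product over $k$ yields the stated constant with the correct sign, since the Hankel determinant of a positive measure is positive. The trade-off is that the paper's route needs a single external citation but hides a substantial Gamma-function simplification, whereas yours makes the telescoping transparent and explains structurally why the answer is a product over $j$ of Pochhammer ratios: each factor is one $h_{j}$. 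You also rightly note that the probabilistic argument requires $N>0$ (indeed $N>\tfrac12$ for the stated density to be integrable at the endpoints in the form given), with the general case following by analytic continuation --- a point the paper does not address. Your fallback via the Selberg integral with $\alpha=\beta=N$, $\gamma=1$ is exactly the paper's route.
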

\begin{proof}
From Theorem \ref{thm:Dngeneralcase}, we have \[
D_{n}^{\mathcal{H}}\left(X\right)=\frac{\left(-1\right)^{\frac{n\left(n+1\right)}{2}}}{\left(n+1\right)!}E_{Z_{0},\dots,Z_{n}}\prod_{0\le j<k\le n}\left(Z_{j}-Z_{k}\right)^{2}\]
where, from (\ref{eq:RHPmoment}), each $Z_{j}$ is distributed according
to (\ref{eq:fZn}). This expectation is a Selberg integral equal to
\cite[17.6.2]{Mehta} \[
E_{Z_{0},\dots,Z_{n}}\prod_{0\le j<k\le n}\left(Z_{j}-Z_{k}\right)^{2}=\left(\frac{2^{n+2N-1}\Gamma\left(N+\frac{1}{2}\right)}{\Gamma\left(N\right)\Gamma\left(\frac{1}{2}\right)}\right)^{n+1}\prod_{j=0}^{n}\frac{\left(j+1\right)!\Gamma^{2}\left(N+j\right)}{\Gamma\left(2N+n+j\right)}\]
and the result follows after some elementary algebra.
\end{proof}
We note that this result can be proved as a consequence of theorem
5 by Ismail \cite{Ismail_det}: for normalized Gegenbauer polynomials
defined as \[
\mathcal{C}_{n}^{N}\left(X\right)=\frac{n!}{\left(2N\right)_{n}}C_{n}^{N}\left(X\right),\]
 the Turàn determinant equals\[
D_{n}^{\mathcal{C}}\left(X\right)=\left(\frac{X^{2}-1}{4}\right)^{\frac{n\left(n+1\right)}{2}}\prod_{j=1}^{n}\frac{j!\left(2N-1\right)_{j}}{\left(N-\frac{1}{2}\right)_{j}\left(N+\frac{1}{2}\right)_{j}}.\]
Applying Nagel's identity,\[
\mathcal{H}_{n}^{N}\left(X\right)=\left(1+X^{2}\right)^{\frac{n}{2}}\mathcal{C}_{n}^{N}\left(\frac{X}{\sqrt{1+X^{2}}}\right)\]
so that \[
D_{n}^{\mathcal{H}}\left(X\right)=\left(1+X^{2}\right)^{n\left(n+1\right)}D_{n}^{\mathcal{C}}\left(X\right)\left(\frac{X}{\sqrt{1+X^{2}}}\right).\]
Elementary algebra yields the result. We remark the similarity between
the former formula and Nagel identity.

\section{Conclusion}

Some new results about Relativistic Hermite polynomials have been
shown; the important fact is that several tools (subordination, moment
representation) have been used, depending on the type of the result.

\end{document}